\documentclass[11pt, draftcls, onecolumn]{IEEEtran}
\usepackage[english]{babel}
\usepackage{amsmath,amsthm}
\usepackage{amsfonts}
\usepackage{graphicx}
\usepackage{verbatim}
\usepackage{amssymb}
\usepackage{epstopdf}
\usepackage{hyperref}
\usepackage{bbm}
\usepackage{cite,bbm,graphicx,amsmath,amssymb,mathrsfs,epsf} \usepackage{multirow}
\usepackage{subfigure,cite,graphicx,epsfig,amsmath,amssymb,mathrsfs,epsf,graphics,color,enumerate}

\allowdisplaybreaks[4]
% THEOREMS -------------------------------------------------------
%[section]

%\numberwithin{equation}{section}

\newtheorem{theorem}{Theorem} 
\newtheorem{lemma}[theorem]{Lemma}

\newtheorem{corollary}[theorem]{Corollary}
\newtheorem{definition}{Definition}

\newtheorem{remark}{Remark}

\newcommand{\bbR}{\mathbb{R}}
\newcommand{\bbN}{\mathbb{N}}
\newcommand{\eps}{\varepsilon}
\newcommand{\tils}{\tilde{s}}

\def \E{\operatorname{E}}

\def \var{\operatorname{Var}}

% ----------------------------------------------------------------
\begin{document}
\title{Streaming Data Transmission in the Moderate Deviations and Central Limit Regimes}%

%\date{\today}%
%
\author{\IEEEauthorblockN{Si-Hyeon Lee, Vincent Y. F. Tan, and Ashish Khisti}\thanks{S.-H. Lee and A. Khisti are with the Department of Electrical and Computer Engineering,
University of Toronto, Toronto, Canada (e-mail: sihyeon.lee@utoronto.ca; akhisti@comm.utoronto.ca).  V.~Y.~F.~Tan is with the Department of Electrical and Computer Engineering
and the Department of Mathematics, National University of Singapore,
Singapore (e-mail: vtan@nus.edu.sg). The work of V. Y. F. Tan is supported in part by a Singapore Ministry of Education (MOE) Tier 2 grant (R-263-000-B61-112).}}

\maketitle

\begin{abstract}
We consider streaming data transmission over a discrete memoryless channel. A new message is given to the encoder at the beginning of each block and the decoder decodes each message sequentially, after a delay of $T$ blocks. In this streaming setup, we study the fundamental interplay between the rate and error probability in the central limit and moderate deviations regimes and show that i) in the moderate deviations regime, the moderate deviations constant improves over the block coding or non-streaming setup by a factor of $T$ and ii) in the central limit regime, the second-order coding rate improves by a factor of approximately $\sqrt{T}$ for a wide range of channel parameters. For both regimes, we propose coding techniques that incorporate a joint encoding of fresh and previous messages. In particular, for the central limit regime, we propose a coding technique with truncated memory to ensure that a summation of constants, which arises as a result of applications of the central limit theorem, does not diverge in the error analysis. 

Furthermore, we explore interesting variants of the basic streaming setup in the moderate deviations regime.  We first consider a scenario with an erasure option at the decoder and show that both the exponents of the total error and the undetected error probabilities improve by factors of $T$. Next, by utilizing the erasure option, we show that the exponent of the total error probability can be improved to that of the undetected error probability (in the order sense) at the expense of a variable decoding delay. Finally, we also extend our results to the case where the message rate is not fixed but alternates between two values. 

\end{abstract}

\section{Introduction}
In many multimedia applications, a stream of data packets is required to be sequentially encoded and decoded under strict latency constraints. For such a streaming setup,  both the fundamental limits and optimal schemes can differ from classical communication systems. In recent years, there has been a growing interest in the characterization of fundamental limits for streaming data transmission \cite{Schulman:96,sahai_thesis, SukhavasiHassibi:11,KhistiDraper:14,DraperKhisti:11,DraperChangSahai:14}. 
In \cite{Schulman:96,sahai_thesis, SukhavasiHassibi:11}, coding techniques based on tree codes were proposed for streaming setup with applications to control systems. In \cite{KhistiDraper:14}, Khisti and Draper established the optimal diversity-multiplexing tradeoff (DMT) for streaming  over a block-fading multiple-input multiple-output channel. In \cite{DraperKhisti:11}, the same authors proposed a coding technique using finite memory for streaming over discrete memoryless channels (DMCs) that attains the same reliability as previously known semi-infinite coding techniques with growing memory. In \cite{DraperChangSahai:14}, the error exponent was studied in a streaming setup of distributed source coding. 
We note that these prior works assumed that the code operates in the large deviations regime in which the rate is bounded away from capacity (or the rate pair is strictly inside the optimal rate region for compression problems) and the error probability decays exponentially as the blocklength increases.

Other interesting asymptotic  regimes include the central limit and moderate deviations regimes. Let $n$ denote the blocklength of a single message henceforth. In the central limit regime,  the rate approaches to the capacity at a speed proportional to $\frac{1}{\sqrt{n}}$ and the error probability does not vanish as the blocklength increases. In the moderate deviations regime, the rate approaches to the capacity strictly slower than $\frac{1}{\sqrt{n}}$ and the error probability decays sub-exponentially fast as the blocklength increases. For block coding problems, both regimes have 
 received a fair amount of  attention recently. These works aim to  characterize the fundamental interplay between the coding rate and error probability.   The most notable early work on channel coding  in the central limit regime (also known as second-order asymptotics or the normal approximation regime) is that of Strassen \cite{Strassen}, who
considered DMCs and showed that the backoff from capacity scales as $\sqrt{n}$ when the error probability is fixed. Strassen also deduced the constant of proportionality, which is related to the so-called {\em dispersion}~\cite{PolyanskiyPoorVerdu:10}.   Hayashi \cite{Hayashi09} considered DMCs with cost constraints as well as discrete channels with Markovian memory. 
Polyanskiy {\em et al.} \cite{PolyanskiyPoorVerdu:10}   refined the asymptotic expansions and also compared the normal approximation to the finite blocklength (non-asymptotic) fundamental limits. For a review and extensions to multi-terminal models, the reader is referred to \cite{tan2015asymptotic}. For the moderate deviations regime,  He {\em et al.}~\cite{he2009redundancy} considered fixed-to-variable length source coding with decoder side information.   Altu\u{g} and Wagner~\cite{AltugWagner:14} initiated the study of moderate deviations for channel coding, specifically DMCs. Polyanskiy and Verd\'u~\cite{PolyanskiyVerdu:10} relaxed some assumptions in the conference version of Altu\u{g} and Wagner's work~\cite{altug2010moderate} and they also considered moderate deviations for additive white Gaussian noise (AWGN) channels. 
However, this line of research has not been extensively  studied for the streaming setup. To the best of our knowledge, there has been no prior work on the streaming setup in the moderate deviations and central limit regimes with the exception \cite{LinTanMotani:arxiv15} where the focus is on source coding. 

In this paper, we study streaming data transmission over a DMC  in the moderate deviations and central limit regimes. Our streaming setup is illustrated in Fig. \ref{fig:streaming}. In each block of length $n$, a new message is given to the encoder at the beginning, and the encoder generates a codeword as a function of all the past and current messages and transmits it over the channel. The decoder, given all the past received channel output sequences, decodes each message after a delay of $T$ blocks. 
This streaming setup introduces a new dimension not present in the block coding problems studied previously. In the special case of $T=1$, the setup reduces to the block channel coding problem. If $T\geq 2$, however, there exists an inherent tension in whether we utilize a block only for the fresh message or use it also for the previous messages with earlier deadlines. It is not difficult to see that due to the memoryless nature of the model, a time sharing scheme\footnote{In a time sharing scheme, some fraction of a block is used for a fresh message and some other fraction of the block is used for previous messages.} will not provide any gain compared to the case of $T=1$. A natural question is whether a joint encoding of fresh and previous messages would improve the performance when $T\geq 2$. 
 
\begin{figure*}[t]
 \centering
  {
  \includegraphics[width=120mm]{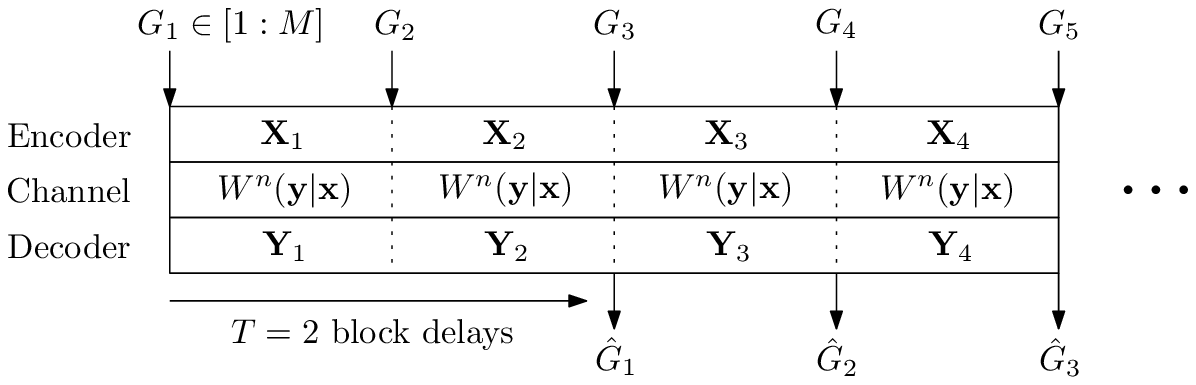}}
  \caption{Our streaming setup is illustrated for the case with $T=2$. 
  In each block, a new message is given to the encoder in the beginning and the encoder generates a codeword as a function of all the past and current messages and transmits it over the channel. Since $T=2$, the decoder decodes each message after two blocks, as a function of all the past received channel output sequences. } \label{fig:streaming}
\end{figure*} 

Our results indicate that the fundamental interplay between the rate and error probability can be greatly improved when delay is allowed in the streaming setup. 
In the moderate deviations regime, the moderate deviations constant is shown to improve over the block coding or non-streaming setup by a factor of $T$. In the central limit regime, the second-order coding rate is shown to  improve by a factor of approximately $\sqrt{T}$ for a wide range of channel parameters. For both asymptotic regimes, we propose coding techniques that incorporate a joint encoding  of fresh and previous messages. For the moderate deviations regime,  we propose a coding technique in which, for every block, the encoder jointly encodes all the previous and fresh messages and the decoder re-decodes all the previous messages in addition to the current target message. For the error analysis of this coding technique, we develop a refined and non-asymptotic version of the moderate deviations upper bound in \cite[Theorem 3.7.1]{DemboZeitouni:09} that allows us to uniformly bound the error probabilities associated with the previous messages. On the other hand, for the central limit regime,  we cannot apply such a coding technique whose memory is linear in the block index. In the error analysis in the central limit regime,  we encounter a  summation of constants as a result of applications of the central limit theorem. If the memory is linear in the block index, this summation causes the upper bound on the error probability to diverge as the block index tends to infinity. Hence, for the central limit regime, we propose a coding technique with {\em truncated} memory where the memory at the encoder varies in a periodic fashion. Our proposed construction judiciously balances the rate penalty imposed due to the truncation and the growth in the error probability due to the contribution from previous messages. By analyzing the second-order coding rate of our proposed setup, we conclude that the channel dispersion parameter also decreases approximately by a factor of $T$ for a wide range of channel parameters.

Furthermore, we explore interesting variants of the basic streaming setup in the moderate deviations regime. First, we consider a scenario where there is an erasure option at the decoder and analyze the undetected error and the total error probabilities, extending a result by Hayashi and Tan~\cite{HayashiTan:15}. Next, by utilizing the erasure option, we analyze the rate of decay of the error probability %error exponent 
when a variable decoding delay is allowed. We show that such a flexibility in the decoding delay can dramatically improve the error probability in the streaming setup. This result is the analog of the classical results on variable-length decoding (see e.g., \cite{Forney:68}) to the streaming setup.
Finally, as a simple example for the case where the message rates are not constant, we consider a scenario where the rate of the messages in odd block indices and the rate of the messages in even block indices are different and analyze the moderate deviations constants separately for the two types of messages. This setting finds applications in video and audio coding where streams of data packets do not necessarily have a constant rate. 
%%%%%%%%%%%%%%%%%%%%%%%%%%%%%%%%%%%%%%%%%%%%%%%%%%%%%%%%%%%%%%%%%%%%%%%%%%%%%%%%%%%%%%%%%%%

The rest of this paper is organized as follows. In Section \ref{sec:model}, we formally state our streaming setup. The main theorems are presented in Section \ref{sec:main} and proved in Section \ref{sec:proof}. In Section \ref{sec:extension}, the moderate deviations result for the basic streaming setup is extended in various directions. We conclude this paper in Section \ref{sec:conclusion}.

\subsection{Notation}
The following notation is used throughout the paper.
We reserve bold-font for vectors whose lengths are the same as blocklength $n$. For two integers $i$ and $j$, $[i:j]$ denotes the set $\{i,i+1,\cdots, j\}$. For constants $x_1,\cdots, x_k$ and $S\subseteq [1:k]$, $x_S$  denotes the  vector $(x_j: j\in S)$ and $x^j_i$ denotes $x_{[i:j]}$ where the subscript is omitted when $i=1$, i.e., $x^j=x_{[1:j]}$. 
This notation is naturally extended for vectors $\mathbf{x}_1,\cdots, \mathbf{x}_k$, random variables $X_1,\cdots, X_k$, and random vectors $\mathbf{X}_1, \cdots, \mathbf{X}_k$.  $\mathbbm{1}\{\mathcal{E}\}$ for an event $\mathcal{E}$ denotes the indicator function, i.e., it is 1 if  $\mathcal{E}$ is true and 0 otherwise. $\lceil\cdot\rceil$ and $\lfloor\cdot\rfloor$ denote the ceiling and floor functions, respectively. 

For a DMC $(\mathcal{X},\mathcal{Y}, \{W(y|x): x\in \mathcal{X}, y\in \mathcal{Y}\} )$ and an input distribution $P$, we use the following standard notation and terminology in information theory: 
\begin{itemize}
\item Information density:
\begin{align}
i(x;y):=\log \frac{W(y|x)}{PW(y)}, \label{eqn:info_dst}
\end{align}
where $PW(y):=\sum_{x\in \mathcal{X}}P(x)W(y|x)$ denotes the output distribution. We note that $i(x;y)$ depends on $P$ and $W$ but this dependence is suppressed. The definition \eqref{eqn:info_dst} can be generalized  for two vectors $x^l$ and $y^l$ of length $l$ as follows:  
\begin{align}
i(x^l;y^l):=\sum_{j=1}^l i(x_j;y_j).
\end{align}

\item Mutual information:
\begin{align}
I(P,W)&:=\E[i(X;Y)]\\
&=\sum_{x\in \mathcal{X}}\sum_{y\in \mathcal{Y}} P(x)W(y|x)\log\frac{W(y|x)}{PW(y)}.
\end{align}

\item Unconditional information variance:
\begin{align}
U(P,W)&:=\var[i(X;Y)].
\end{align}

\item Conditional information variance: 
\begin{align}
V(P,W)&:=\E[\var[i(X;Y)|X]].
\end{align} 

\item Capacity: 
\begin{align}
C=C(W):=\max_{P\in \mathcal{P}} I(P,W),
\end{align}
where $\mathcal{P}$ denotes the probability simplex on $\bbR^{|\mathcal{X}|}$.

\item Set of capacity-achieving input distributions: 
\begin{align}
\Pi=\Pi(W):=\{P\in \mathcal{P}: I(P,W)=C(W)\}.
\end{align}

\item Channel dispersion 
\begin{align}
V=V(W)&:=\min_{P\in \Pi} V(P,W) \label{eqn:dispersion}\\
&\overset{(a)}{=}\min_{P\in \Pi} U(P,W),
\end{align}
where $(a)$ is from \cite[Lemma 62]{PolyanskiyPoorVerdu:10}, where it is shown that $V(P,W)=U(P,W)$ for all $P\in \Pi$. 
\end{itemize}

%%%%%%%%%%%%%%%%%%%%%%%%%%%%%%%%%%%%%%%%%%%%%%%%%%%%
\section{Model} \label{sec:model}
Consider a DMC $(\mathcal{X},\mathcal{Y}, \{W(y|x): x\in \mathcal{X}, y\in \mathcal{Y}\} )$. A streaming code is defined as follows:
\begin{definition}[Streaming code]	\label{def:basic}
An $(n,M,\epsilon,T)$-streaming code consists of 
\begin{itemize}
\item a sequence of messages $\{G_k\}_{k\geq 1}$ each distributed uniformly over $\mathcal{G}:= [1:M]$,
\item a sequence of encoding functions $\phi_k: \mathcal{G}^k\rightarrow \mathcal{X}^n$ that maps the message sequence $G^k\in  \mathcal{G}^k$ to the channel input codeword $\mathbf{X}_k \in \mathcal{X}^n$, and
\item a sequence of decoding functions $\psi_k: \mathcal{Y}^{(k+T-1)n}\rightarrow \mathcal{G}$ that maps the channel output sequences $\mathbf{Y}^{k+T-1} \in \mathcal{Y}^{(k+T-1)n}$ to a message estimate $\hat{G}_k\in \mathcal{G}$,
\end{itemize}
that satisfies  
\begin{align}
\limsup_{N\rightarrow \infty}\sum_{k=1}^N\frac{\Pr(\hat{G}_k\neq G_k)}{N} \leq \epsilon,
\end{align}
i.e., the probability of error averaged over all block messages does not exceed $\epsilon$. 
\end{definition}

We note that a streaming code with a {\em fixed} blocklength $n$ consists of a {\em sequence} of encoding and decoding functions since a stream of messages is sequentially encoded and decoded. Fig. \ref{fig:streaming} illustrates our streaming setup for the case with $T=2$. In the beginning of block $k\in \bbN$, new message $G_k$ is given to the encoder. The encoder generates a codeword $\mathbf{X}_k$ as a function of all the past and current messages $G^k$ and transmits it over the channel in block $k$. Since $T=2$, the decoder decodes message $G_k$ at the end of block $k+1$, as a function of all the past received channel output sequences $\mathbf{Y}^{k+1}$. 

%In this paper, we are mainly interested in the following fundamental limit on the average probability of error: 
%\begin{align}
%\epsilon^*(n,M,T)=\inf \{\epsilon: \exists (n,M,\epsilon,T)\mbox{-streaming code} \}.
%\end{align}

%\begin{definition}
%A channel with capacity $C$ is said to satisfy the moderate deviation property with constant $\mu$ if for any sequence of integers $M_n$ such that 
%\begin{align}
%\log M_n=nC-n\rho_n,
%\end{align}
%where $\rho_n>0, \rho_n\rightarrow 0$ and $n\rho_n^2\rightarrow \infty$, we have
%\begin{align}
%\limsup_{n\rightarrow \infty} \frac{1}{n\rho_n^2} \log \epsilon^*(n,M,T)=-\frac{1}{2\mu}.
%\end{align}
%\end{definition}

\section{Main Results} \label{sec:main}
In this section, we state our main results. The following two theorems present achievability bounds for the moderate deviations and the central limit regimes, respectively, which are proved in Section \ref{sec:proof}.
\begin{theorem}[Moderate deviations regime] \label{thm:MD}
Consider a DMC $(\mathcal{X},\mathcal{Y}, \{W(y|x): x\in \mathcal{X}, y\in \mathcal{Y}\} )$ with $V>0$ and any sequence of integers $M_n$ such that $\log M_n=nC-n\rho_n$, where $\rho_n>0, \rho_n\rightarrow 0$ and $n\rho_n^2\rightarrow \infty$.\footnote{Throughput the paper, we ignore integer constraints on the number of codewords $M_n$.} Then, there exists a sequence of $(n, M_n, \epsilon_n, T)$-streaming codes such that\footnote{If $\limsup_{n\rightarrow \infty} \frac{1}{n\rho_n^2} \log \epsilon_n\leq -\frac{1}{2\nu}$ for some $\nu>0$, $\nu$ corresponds to an upper bound on the moderate deviations constant. In the special case of $T=1$, the moderate deviations constant is shown to be the channel dispersion $V$  in \cite{AltugWagner:14,PolyanskiyVerdu:10}.} 
\begin{align}
\limsup_{n\rightarrow \infty} \frac{1}{n\rho_n^2} \log \epsilon_n \leq -\frac{T}{2V}. \label{eqn:MD_positiveV}
\end{align}
%If $V=0$, it follows 
%\begin{align}
%\limsup_{n\rightarrow \infty} \frac{1}{n\rho_n} \log \epsilon^*(n,M_n,T)\leq -T . \label{eqn:MD_zeroV}
%\end{align} 
\end{theorem}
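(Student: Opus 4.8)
The plan is to construct a streaming code via a joint encoding of fresh and previous messages that effectively pools $T$ independent blocks' worth of channel uses toward decoding each individual message, thereby gaining a factor of $T$ in the moderate deviations exponent. Fix a capacity-achieving input distribution $P\in\Pi$ that attains the channel dispersion, i.e., $V(P,W)=U(P,W)=V$. In block $k$, the encoder has available all messages $G^k=(G_1,\dots,G_k)$; it partitions its codebook so that the pair $(G_{k-T+1},\dots,G_k)$ (with missing indices treated as a fixed dummy symbol) indexes one of roughly $M_n^T$ codewords, drawn i.i.d.\ according to $P^n$ in a random-coding argument. When the decoder wishes to decode $G_k$ at the end of block $k+T-1$, it has observed $\mathbf{Y}_k,\dots,\mathbf{Y}_{k+T-1}$, each of which carries information about $G_k$ (since $G_k$ participates in the codewords transmitted in all of blocks $k,k+1,\dots,k+T-1$). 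A threshold/typicality decoder based on the sum of information densities $\sum_{t=0}^{T-1} i(\mathbf{X}_{k+t};\mathbf{Y}_{k+t})$ — equivalently a maximum-likelihood or joint-typicality decoder over the combined $Tn$ channel uses — is then used, re-estimating the whole window of $T$ messages simultaneously.

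The key steps, in order: (i) set up the random codebook and the joint encoder/decoder precisely, handling the boundary blocks $k<T$ separately (they contribute only finitely many terms and do not affect the $\limsup$); (ii) write down a union bound on $\Pr(\hat G_k\neq G_k)$ over the $O(M_n^T)$ competing message windows, splitting into the "atypical channel behavior" event $\{\sum_{t}i(\mathbf{X}_{k+t};\mathbf{Y}_{k+t}) \le Tn(C-\rho_n') \}$ for a slightly inflated backoff $\rho_n'$, and the "confusable codeword" events; (iii) bound the confusable-codeword sum by the standard change-of-measure estimate, giving a term behaving like $\exp(Tn\rho_n - \text{(rate of the window)})$, which is controlled because the window rate is $T\log M_n = TnC - Tn\rho_n$; (iv) bound the atypical-behavior probability using a moderate deviations bound on the sum of $Tn$ i.i.d.\ (given the inputs) terms each with mean $C$ under $P$ and total variance $TnV$ — this is where the factor $T/(2V)$ emerges, since $\Pr\big(\tfrac{1}{Tn}\sum \le C-\rho_n\big) \approx \exp\!\big(-\tfrac{(Tn\rho_n)^2}{2\cdot TnV}\big) = \exp\!\big(-\tfrac{Tn\rho_n^2}{2V}\big)$; (v) verify that the sequence of codes so obtained satisfies the Cesàro-average error criterion in Definition~\ref{def:basic}, which follows because the per-block bound is uniform in $k$.

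The main obstacle — and the place where streaming genuinely differs from block coding — is step (iv) combined with the requirement that the bound hold \emph{uniformly over all block indices $k$}. A naive appeal to \cite[Theorem 3.7.1]{DemboZeitouni:09} gives only an asymptotic statement for a single fixed sequence, whereas here we need a single non-asymptotic moderate deviations upper bound on $\Pr\big(\tfrac{1}{m}\sum_{j=1}^m Z_j \le \E[Z_j]-\rho\big)$ that is valid for all $m$ in the relevant range (here $m=Tn$, but in the re-decoding analysis one must also control the partial windows) with an explicit dependence on the tail and variance of the i.i.d.\ summands. I would therefore first prove such a refined, uniform moderate deviations inequality — essentially a Chernoff/Bernstein-type bound with the sub-exponential correction made explicit, optimized over the tilting parameter — as a standalone lemma, and then apply it with the summands $i(X_j;Y_j)-C$ under $P\times W$. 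Once that lemma is in hand, choosing $\rho_n' = \rho_n(1+o(1))$ so that the confusable-codeword term is negligible and the atypical-behavior term dominates yields \eqref{eqn:MD_positiveV}; the truncation subtleties that plague the central limit regime do not arise here because $n\rho_n^2\to\infty$ makes each block's contribution individually summable in the relevant sense.
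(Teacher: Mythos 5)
Your high-level strategy is aligned with the paper's in two respects: you correctly identify that pooling roughly $Tn$ channel uses per message is what yields the factor $T$ in the exponent, and you correctly identify that a non-asymptotic moderate deviations upper bound (a Chernoff/Bernstein-type estimate with explicit third-order control of the cumulant generating function, proved as a standalone lemma) is needed to replace a naive appeal to the asymptotic statement of Dembo--Zeitouni; this matches Lemma~\ref{lemma:nonasymptotic_MD} in spirit. However, the coding scheme you propose is genuinely different from the paper's, and as described it has a gap that your analysis does not close.

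You propose a \emph{finite-memory} encoder in which block $k$ depends only on $(G_{k-T+1},\dots,G_k)$, and a decoder for $G_k$ that examines the sum $\sum_{t=0}^{T-1}i(\mathbf{X}_{k+t};\mathbf{Y}_{k+t})$ while ``re-estimating the whole window of $T$ messages simultaneously.'' But $\mathbf{X}_k$ depends on $G_{k-T+1},\dots,G_{k-1}$, which are neither in the re-estimation window you describe nor obviously recoverable from $\mathbf{Y}_{[k:k+T-1]}$; and $\mathbf{X}_{k+T-1}$ depends on $G_{k+1},\dots,G_{k+T-1}$. So the blocks you decode over jointly involve $2T-1$ uncertain messages, which at rate $\log M_n/n \to C$ exceeds what $Tn$ channel uses can support. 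If instead you resolve the trailing unknowns by reusing earlier decoding decisions, you get a recursion of the form $p_k \le \delta + \sum_{j=k-T+1}^{k-1}p_j$ with $\delta \approx \exp\{-Tn\rho_n^2/(2V)\}$, whose solution grows with $k$ and makes the Cesàro-average criterion of Definition~\ref{def:basic} blow up as $N\to\infty$ for fixed $n$. If instead you re-decode all of $G^{k-1}$ at time $T_k$ but with memory-$T$ codewords, every re-decoding step sees a window of exactly $T$ blocks, so the union bound over $j\in[1:k]$ gives $k\delta$, which again diverges in the Cesàro average. Your closing remark that ``the truncation subtleties that plague the central limit regime do not arise here'' has the logic inverted: the paper's \emph{infinite-memory} scheme (block $k$ depends on all of $G^k$) is what makes the moderate-deviations sum converge, because re-decoding $G_j$ at time $T_k$ uses a window of $T_k-j+1$ blocks whose size \emph{grows} as $j$ decreases, so the terms $\exp\{-(T_k-j+1)n\rho_n^2/(2V)\}$ form a geometric series bounded uniformly in $k$ by $O(\exp\{-Tn\rho_n^2/(2V)\})$. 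Truncation is introduced in the paper only in the central-limit regime, where the per-term decay is no longer geometric. To repair your proof, you should switch to the infinite-memory encoder, adopt the sequential re-decoding rule over the growing windows $[j:T_k]$, and show that the sum over $j$ converges; the rest of your outline (union bound, change-of-measure estimate for confusable codewords, non-asymptotic moderate deviations lemma) then goes through essentially as in the paper.
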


\begin{theorem}[Central limit regime] \label{thm:CL}
Consider a DMC $(\mathcal{X},\mathcal{Y}, \{W(y|x): x\in \mathcal{X}, y\in \mathcal{Y}\} )$ with $V>0$. For any $L>0$ and $0<\delta<1/2$, there exists a sequence of $(n, M_n, \epsilon_n, T)$-streaming codes such that\footnote{$L$ is termed second-order coding rate in this paper. This is slightly different from what is common in the literature where instead  $-L$ is known as the second-order coding rate \cite{Hayashi09}.  } 
\begin{align}
\log M_n&= nC-L\sqrt{n}+O(n^{\delta}\log n) \label{eqn:CL_rate}
\end{align}
and
\begin{align}
\epsilon_n&\leq \sum_{j=T}^{\infty} Q\left(\frac{\sqrt{j}}{\sqrt{V}}L\right)+O\left(n^{-\delta/2}\right). \label{eqn:CL_err}
\end{align}
\end{theorem}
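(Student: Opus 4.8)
The plan is to design a streaming code with \emph{truncated} (periodically reset) memory and analyze its error probability via the Berry--Esseen theorem. Fix a truncation parameter $N_{\mathrm{tr}}$ (to be chosen on the order of $n^{\delta}$) so that the encoder groups the message stream into super-blocks of $N_{\mathrm{tr}}$ blocks each. Within a super-block, at block index $k$ (with $1\le k\le N_{\mathrm{tr}}$) the encoder jointly encodes the fresh message $G_k$ together with all previous messages in the \emph{same} super-block, using independent random codewords drawn i.i.d.\ from $P^{\otimes n}$ where $P$ attains the dispersion $V$ in \eqref{eqn:dispersion}. At the super-block boundary the memory is flushed. To pay for the overhead incurred by re-encoding previous messages inside each super-block, we shave the rate of each message by an amount of order $\frac{N_{\mathrm{tr}}\log M_n}{n}=O(n^{\delta}\log n)$ bits, which is exactly the $O(n^{\delta}\log n)$ slack allowed in \eqref{eqn:CL_rate}; this is the point where the construction ``judiciously balances'' truncation cost against accumulated error.

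Next I would set up the decoder. To decode a target message $G_k$, the decoder waits for $T$ blocks and performs joint (maximum-likelihood or threshold/information-density) decoding of the tuple of messages in the current super-block that have been observed over those $T$ blocks, using all $T$ received blocks $\mathbf Y_k,\dots,\mathbf Y_{k+T-1}$. Because the memory is truncated, the number of ``competing'' messages sharing channel uses with $G_k$ is bounded by the constant $N_{\mathrm{tr}}$ rather than growing with the block index. A union bound over error events splits into (i) the event that the true codeword tuple has small information density $i(\mathbf X^{\cdot};\mathbf Y^{\cdot})$ over the relevant $T$ blocks — handled by Berry--Esseen, giving the dominant $Q\!\left(\tfrac{\sqrt{T}}{\sqrt V}L\right)$ term when the backoff is $L\sqrt n$ per block and $T$ blocks contribute variance $TnV$ — and (ii) the event that some wrong tuple has large information density, handled by a standard change-of-measure / union bound that contributes the $O(n^{-\delta/2})$ (more precisely $O(1/\sqrt n)$ per message, summed and controlled) correction. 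Summing the leading terms over $j=T,T+1,\dots$ accounts for the fact that a message at position $k$ inside a super-block is, for the purpose of the average error criterion in Definition~\ref{def:basic}, effectively decoded with an ``effective delay'' ranging over $j\ge T$; averaging over positions and taking $N_{\mathrm{tr}}\to\infty$ (slowly, as $n\to\infty$) yields the infinite sum $\sum_{j=T}^{\infty} Q\!\left(\tfrac{\sqrt j}{\sqrt V}L\right)$, which converges since $Q(\sqrt j\,c)$ decays like $e^{-jc^2/2}$.

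The two technical obstacles I anticipate are: first, making the ``effective delay $j$'' argument rigorous — one must carefully track, for each message inside a super-block, how many future blocks still carry information about it before the memory flush, and show that the per-message error is bounded by $Q(\sqrt j L/\sqrt V)$ plus a Berry--Esseen remainder uniformly over $j$; the uniformity is crucial so that the summed remainder is $o(1)$. Second, choosing $N_{\mathrm{tr}}=\Theta(n^{\delta})$ so that simultaneously (a) the rate penalty is $O(n^{\delta}\log n)$ as required by \eqref{eqn:CL_rate}, (b) the accumulated Berry--Esseen remainders — roughly $N_{\mathrm{tr}}$ terms each of size $O(1/\sqrt n)$ — sum to $O(n^{\delta}/\sqrt n)=O(n^{-\delta/2})$ after normalization, matching \eqref{eqn:CL_err}, and (c) the truncation does not distort the leading $Q$-terms. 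Balancing (a) and (b) is what pins down the exponent $\delta$ and explains the asymmetric roles of $n^{\delta}\log n$ in the rate and $n^{-\delta/2}$ in the error. The rest — computing the mean $nC$ and variance $nV$ of the information density under the capacity-achieving $P$, and invoking $V(P,W)=U(P,W)$ from \cite[Lemma 62]{PolyanskiyPoorVerdu:10} — is routine.
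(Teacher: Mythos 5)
Your high-level plan (truncated memory, random coding from the dispersion-optimal $P$, Berry--Esseen for the dominant term, summing Berry--Esseen remainders against a truncation budget) is on the right track, and the paper's proof does indeed use a truncated-memory construction. But two of your central claims are wrong in ways that would cause the argument to collapse.

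First, a \emph{hard flush} at super-block boundaries cannot give every message an effective delay of at least $T$. If the memory resets after block $N_{\mathrm{tr}}$, the last $T-1$ messages of each super-block appear in fewer than $T$ blocks; the very last message $G_{N_{\mathrm{tr}}}$ appears only in its own block and would incur error roughly $Q(L/\sqrt{V})$, not $Q(\sqrt{T}L/\sqrt{V})$. The paper avoids this by making the memory \emph{periodically time-varying} between a minimum $B$ and a maximum $A$ with $B\geq T$: adjacent groups overlap, so every message sits in at least $B\geq T$ codewords. You need this overlap; a pure flush will not do.

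Second, and more importantly, your explanation of where $\sum_{j=T}^{\infty}Q(\sqrt{j}L/\sqrt{V})$ comes from is incorrect. You say it arises by \emph{averaging} over positions inside a super-block as $N_{\mathrm{tr}}\to\infty$. If that were the mechanism, the average $\frac{1}{N_{\mathrm{tr}}}\sum_{j}Q(\sqrt{j}L/\sqrt{V})$ would tend to zero — better than what the theorem claims, which should be a red flag. In the paper, the sum is a \emph{union bound over error-propagation events for a single target message}: to decode $G_k$, the decoder must correctly re-decode a chain of earlier messages whose codewords are entangled with $G_k$'s (block $k$ encodes $G_1,\dots,G_k$, etc.). Each link $j$ in that chain uses $j$ blocks and contributes $Q(\sqrt{j}L/\sqrt{V})$ plus a Berry--Esseen remainder, and the sum starts at $j=T$ because the target message itself uses exactly $T$ blocks. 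Your proposal never confronts this error propagation; a scheme that jointly decodes and ignores it would need a union bound over $M_n^{\Theta(N_{\mathrm{tr}})}$ competitor tuples, which you do not control. Related to this, the parameter choice is off: the paper takes the \emph{maximum} memory $A = n^{1-\delta}$ (large) and the \emph{minimum} memory $B = \frac{V}{L^2}\delta\log n$ (logarithmic), so the rate penalty $\frac{2B-T-2}{A}(nC-L\sqrt n)=O(n^{\delta}\log n)$ and the Berry--Esseen accumulation over $O(A)$ sequential decoding steps is $O(A\cdot n^{-1/2})=O(n^{1/2-\delta})$ — controlled, but by a different bookkeeping than your $N_{\mathrm{tr}}=\Theta(n^{\delta})$ choice, whose rate cost $N_{\mathrm{tr}} C=\Theta(n^{\delta})$ also misses the claimed $\log n$ factor.
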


The following corollary, whose proof is in Appendix  \ref{appendix:asymptotic}, elucidates a closed-form and interpretable expression for the upper bound on the error probability in \eqref{eqn:CL_err}.
\begin{corollary}\label{coro:CL_asymp}
Consider a DMC $(\mathcal{X},\mathcal{Y}, \{W(y|x): x\in \mathcal{X}, y\in \mathcal{Y}\} )$ with $V>0$. 
For any $L>0$, there exists a sequence of $(n, M_n, \epsilon_n, T)$-streaming codes such that 
\begin{align}
\lim_{n\rightarrow \infty}\frac{nC-\log M_n}{\sqrt{n}}&= L
\end{align}
and
\begin{align}
\limsup_{n\rightarrow \infty}\epsilon_n&\leq  c_{L,V,T}Q\left(\sqrt{\frac{T}{V}}L\right),
\end{align}
where $c_{L,V,T}$ defined in the following has the property that for every $T\in\bbN$,  $c_{L,V,T}$ tends to 1 as $\frac{L}{\sqrt{V}}$ tends to infinity:
\begin{equation}
c_{L,V,T} :=  \frac{1+(L/\sqrt{V})^2 T}{(L/\sqrt{V})^2 T} \cdot \frac{1}{1-\exp\{-(L/\sqrt{V})^2 /2\}}. \label{eqn:def_C}
\end{equation}
\end{corollary}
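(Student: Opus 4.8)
The plan is to invoke Theorem~\ref{thm:CL} directly and then post-process its two conclusions. For the rate, I would fix any $\delta\in(0,1/2)$ and apply Theorem~\ref{thm:CL} with this $\delta$; since $O(n^{\delta}\log n)/\sqrt{n}=O(n^{\delta-1/2}\log n)\to 0$, dividing \eqref{eqn:CL_rate} by $\sqrt{n}$ gives $\lim_{n\to\infty}(nC-\log M_n)/\sqrt{n}=L$. Likewise the additive $O(n^{-\delta/2})$ term in \eqref{eqn:CL_err} vanishes, so $\limsup_{n\to\infty}\epsilon_n\le\sum_{j=T}^{\infty}Q(\sqrt{j/V}\,L)$, and the whole problem reduces to the deterministic estimate $\sum_{j=T}^{\infty}Q(a\sqrt{j})\le c\,Q(a\sqrt{T})$ with $a:=L/\sqrt{V}$ and $c$ the constant in \eqref{eqn:def_C}.

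To obtain this estimate I would bound each term $Q(a\sqrt{j})$ for $j\ge T$ by $Q(a\sqrt{T})$ times a geometrically decaying factor. Writing $\phi(x):=(2\pi)^{-1/2}e^{-x^{2}/2}$ and using the classical Gaussian tail bounds $\frac{x}{x^{2}+1}\phi(x)\le Q(x)\le\frac{1}{x}\phi(x)$, valid for all $x>0$, one gets for $j\ge T$
\[
\frac{Q(a\sqrt{j})}{Q(a\sqrt{T})}\;\le\;\frac{(a\sqrt{T})^{2}+1}{(a\sqrt{T})(a\sqrt{j})}\cdot\frac{\phi(a\sqrt{j})}{\phi(a\sqrt{T})}\;=\;\frac{a^{2}T+1}{a^{2}\sqrt{Tj}}\,e^{-a^{2}(j-T)/2}\;\le\;\frac{a^{2}T+1}{a^{2}T}\,e^{-a^{2}(j-T)/2},
\]
where the last step uses $\sqrt{Tj}\ge T$. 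For $j=T$ this reads $1\le(a^{2}T+1)/(a^{2}T)$, which is trivially true, so the bound is uniform over all $j\ge T$ and no separate treatment of the boundary term is needed. Summing the geometric series, $\sum_{j=T}^{\infty}e^{-a^{2}(j-T)/2}=(1-e^{-a^{2}/2})^{-1}$, then yields $\sum_{j=T}^{\infty}Q(a\sqrt{j})\le\frac{a^{2}T+1}{a^{2}T}\cdot\frac{1}{1-e^{-a^{2}/2}}\,Q(a\sqrt{T})$, which upon substituting $a=L/\sqrt{V}$ is exactly $c_{L,V,T}\,Q(\sqrt{T/V}\,L)$, as claimed.

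Finally, the assertion that $c_{L,V,T}\to1$ as $L/\sqrt{V}\to\infty$ is immediate from \eqref{eqn:def_C}: as $a=L/\sqrt{V}\to\infty$, both factors $\frac{1+a^{2}T}{a^{2}T}=1+\frac{1}{a^{2}T}$ and $\frac{1}{1-e^{-a^{2}/2}}$ tend to $1$ for every fixed $T\in\bbN$. I do not anticipate a real obstacle here; the only point needing a little care is selecting tail bounds on $Q$ that are simultaneously valid for all $x>0$ (so the $j=T$ term requires no special handling) and tight enough that the constant collapses to $1$ in the large-$L/\sqrt{V}$ limit — the pair $\frac{x}{x^{2}+1}\phi(x)\le Q(x)\le\frac{1}{x}\phi(x)$ does precisely this.
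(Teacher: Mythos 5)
Your proposal is correct and is essentially the same argument as the paper's Appendix~\ref{appendix:asymptotic}: both reduce to the deterministic bound $\sum_{j\ge T}Q(a\sqrt{j})\le c_{L,V,T}\,Q(a\sqrt{T})$ with $a=L/\sqrt{V}$, and both establish it via the same pair of Gaussian tail bounds $\frac{x\phi(x)}{1+x^2}\le Q(x)\le\frac{\phi(x)}{x}$, the inequality $\sqrt{jT}\ge T$, and summation of the resulting geometric series. The only cosmetic difference is that you bound each ratio $Q(a\sqrt{j})/Q(a\sqrt{T})$ term by term, whereas the paper first bounds the sum by a geometric series and applies the lower tail bound once at the end; these are algebraically identical.
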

\begin{figure*}[t]
 \centering
  {
  \includegraphics[width=150mm]{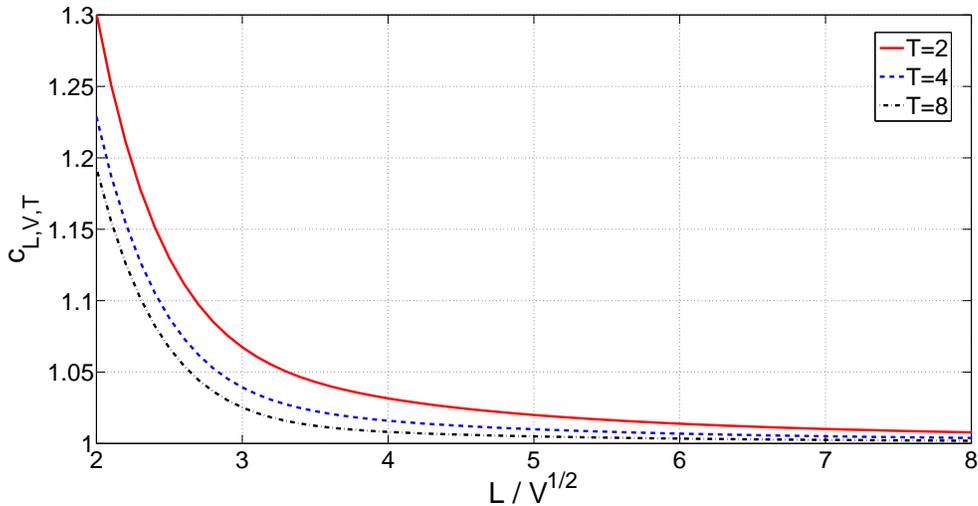}}
  \caption{The constant $c_{L,V,T}$ in Theorem \ref{thm:CL} is illustrated in terms of $\frac{L}{\sqrt{V}}$. } \label{fig:const}
\end{figure*}
Fig. \ref{fig:const} illustrates how fast the constant $c_{L,V,T}$ in Corollary \ref{coro:CL_asymp} converges to 1 as  $\frac{L}{\sqrt{V}}$ increases. For $T=2$, we can see that $c_{L,V,T}$ is less than 1.1 when $\frac{L}{\sqrt{V}}=3$ and is less than 1.05 when $\frac{L}{\sqrt{V}}=4$. Hence, the effect of the constant $c_{L,V,T}$ is not significant for a wide range of $L, V,$ and $T$.

Theorems \ref{thm:MD} and \ref{thm:CL} illustrate that the fundamental interplay between the rate and probability of  error can be greatly improved when delay is allowed in the streaming setup. In the moderate deviations regime, the moderate deviations constant improves by a factor of $T$. Assuming that $c_{L,V,T}$ can be approximated sufficiently well by $1$, for the central limit regime, the second-order coding rate $L$ is improved (reduced) by a factor of $\sqrt{T}$. Another way to view this via the lens of the channel dispersion $V$; this parameter is approximately reduced by a factor of $T$.

\section{Proofs of the Main Theorems} \label{sec:proof}
\subsection{Proof of Theorem \ref{thm:MD} for the moderate deviations regime}  \label{subsec:MD_pf}
Consider a DMC $(\mathcal{X},\mathcal{Y}, \{W(y|x): x\in \mathcal{X}, y\in \mathcal{Y}\} )$ with $V>0$ and any sequence of integers $M_n$ such that $\log M_n=nC-n\rho_n$, where $\rho_n>0, \rho_n\rightarrow 0$ and $n\rho_n^2\rightarrow \infty$. We denote by $P_X$ an input distribution that achieves the dispersion \eqref{eqn:dispersion}.

\subsubsection{Encoding} \label{subsubsec:MD_encoding}
For each $k\in \bbN$ and $g^k \in \mathcal{G}^{k}$, generate $\mathbf{x}_k(g^k)$ in an i.i.d. manner according to $P_X$. The generated codewords constitute the codebook $\mathcal{C}_n$. In block $k$, after observing the true message sequence $G^k$, the encoder sends $\mathbf{x}_k(G^k)$. 

\subsubsection{Decoding} 
Consider the decoding of $G_k$ at the end of block $T_k:=k+T-1$. In our scheme, the decoder not only decodes $G_k$, but also re-decodes $G_1,\cdots, G_{k-1}$ at the end of block $T_k$.\footnote{We note that $G_j$ for $j\in [1:k-1]$ has been already decoded at  the end of block $T_j$. Nevertheless, the decoder re-decodes $G^{k-1}$ at the end of $T_k$, because the decoder needs to decode $G^{k-1}$ to decode $G_k$ and the probability of error associated with $G^{k-1}$ becomes lower (in general) by utilizing recent channel output sequences.} Let $\hat{G}_{T_k,j}$ denote the estimate of $G_j$ at the end of block $T_k$. The decoder decodes $G_j$ sequentially from $j=1$ to $j=k$ as follows: 
\begin{itemize}
\item Given $\hat{G}_{T_k,[1:j-1]}$, the decoder chooses $\hat{G}_{T_k,j}$ according to the following rule.\footnote{When $j=1$, $\hat{G}^{j-1}_{T_k}$ is null. } If there is a unique index $g_j\in \mathcal{G}$ that satisfies\footnote{We use the following notation for the set of codewords. Let $\mathcal{K}_j$ for $j\in \bbN$ denote the set of message indices mapped to the $j$-th codeword according to the encoding procedure. For $\mathcal{J}\subseteq \bbN$ and $\mathcal{K}\supseteq \bigcup_{j\in \mathcal{J}} \mathcal{K}_j$, we denote by  $\mathbf{x}_{\mathcal{J}}(g_{\mathcal{K}})$ the set of codewords $\{\mathbf{x}_j(g_{\mathcal{K}_j}): j\in \mathcal{J}\}$. } 
\begin{align}
i(\mathbf{x}_{[j:T_k]}(\hat{G}_{T_k, [1:j-1]}, g_{[j:T_k]}), \mathbf{y}_{[j:T_k]})&> (T_k-j+1) \cdot \log M_n \label{eqn:dec_rule}
\end{align}
for some $g_{[j+1:T_k]}$, let $\hat{G}_{T_k,j}=g_j$.\footnote{We note that $i(\cdot, \cdot)$ in \eqref{eqn:dec_rule} is defined in terms of $P_X$ and $W$. This dependence is suppressed henceforth.} 
If there is none or more than one such $g_j$, let $\hat{G}_{T_k,j}=1$. 
\item If $j<k$, repeat the above procedure by increasing $j$ to $j+1$. If $j=k$, the decoding procedure terminates and the decoder declares that the $k$-th message is $\hat{G}_{k}:=\hat{G}_{T_k,k}$.
\end{itemize}

\subsubsection{Error analysis} 
We first consider the probability of error averaged over random codebook $\mathcal{C}_n$.  The error event $\{\hat{G}_k\neq G_k\}$ for $k\in \bbN$ happens only if at least one of the following $2k$ events occurs: 
\begin{align}
\mathcal{E}_{k,j}&:= \{
i(\mathbf{X}_{[j:T_k]}(G^{T_k}), \mathbf{Y}_{[j:T_k]})\leq (T_k-j+1) \cdot \log M_n
\},~ j\in [1:k] \label{eqn:err1}\\
\tilde{\mathcal{E}}_{k,j}&:= \{i(\mathbf{X}_{[j:T_k]}(G^{j-1},g_{[j:T_k]}), \mathbf{Y}_{[j:T_k]})> (T_k-j+1) \cdot \log M_n \cr
&~~~~~~~~~~~~~~~~~~~~~~~~~~~~~~~~~~~~ \mbox{ for some }g_{[j:T_k]} \mbox{ such that }g_{j} \neq G_{j} \},~ j\in [1:k]. \label{eqn:err2}
\end{align}
Now, we have
\begin{align}
\E_{\mathcal{C}_n}[\Pr(\hat{G}_k\neq G_k|\mathcal{C}_n)]&\leq \sum_{j=1}^k \left(\Pr(\mathcal{E}_{k,j})+\Pr(\tilde{\mathcal{E}}_{k,j}) \right).
\end{align}
For each $j\in [1:k]$, we have 
\begin{align}
\Pr(\mathcal{E}_{k,j})+\Pr(\tilde{\mathcal{E}}_{k,j})&\leq \Pr\left(\sum_{l=1}^{n (T_k-j+1) } i(X_l;Y_l)\leq (T_k-j+1) \cdot \log M_n\right)\cr
&~~~~+M_n^{T_k-j+1} \Pr\left(\sum_{l=1}^{n (T_k-j+1) } i(X_l;\bar{Y}_l)>  (T_k-j+1)  \log M_n\right)\\
&\overset{(a)}{=}\E\left[\exp\left\{-\left[\sum_{l=1}^{n(T_k-j+1)} i(X_l;Y_l)- (T_k-j+1) \log M_n\right]^+\right\}\right]\\
&=\E\left[\exp\left\{-\left[\sum_{l=1}^{n(T_k-j+1)} i(X_l;Y_l)- (T_k-j+1)n(C-\rho_n)\right]^+\right\}\right], \label{eqn:fixj}
\end{align}
where $(X_l,Y_l, \bar{Y}_l)$'s are i.i.d. random variables each generated according to $P_X(x_l)W(y_l|x_l)P_XW(\bar{y}_l)$ and $(a)$ is from the identity \cite[Eq.~(69)]{PolyanskiyPoorVerdu:10} used to derive the DT bound.

Now, fix an arbitrary $0<\lambda<1$.  By applying the chain of inequalities \cite[Eq. (53)-(56)]{PolyanskiyVerdu:10}, we have 
\begin{align}
&\exp\left\{-\left[\sum_{l=1}^{n(T_k-j+1)} i(X_l;Y_l)- (T_k-j+1)n(C-\rho_n)\right]^+\right\}\cr
&\leq  \mathbbm{1}\left\{\sum_{l=1}^{n(T_k-j+1)} i(X_l;Y_l)\leq (T_k-j+1)n(C-\lambda\rho_n)\right\} +\exp\left\{-(T_k-j+1)n(1-\lambda)\rho_n\right\}.  \label{eqn:spr}
\end{align}
Combining the bounds in \eqref{eqn:fixj} and \eqref{eqn:spr}, we obtain
\begin{align}
&\Pr(\mathcal{E}_{k,j})+\Pr(\tilde{\mathcal{E}}_{k,j})\cr
&\leq \Pr\left(\sum_{l=1}^{n(T_k-j+1)} i(X_l;Y_l)\leq (T_k-j+1)n(C-\lambda\rho_n)\right)+\exp\left\{-(T_k-j+1)n(1-\lambda)\rho_n\right\}\label{eqn:spr2}\\
&\overset{(a)}{\leq} \exp\left\{-(T_k-j+1)n\left(\frac{\lambda^2 \rho_n^2}{2V}-\lambda^3 \rho_n^3 \tau\right)\right\}+\exp\left\{-(T_k-j+1)n(1-\lambda)\rho_n\right\} 
\end{align}
for sufficiently large $n$, where $\tau$ is some non-negative constant dependent only on the input distribution $P_X(x)$ and channel statistics $W(y|x)$ and $(a)$ is from the moderate deviations upper bound in Lemma~\ref{lemma:nonasymptotic_MD}, which is relegated to the end of this subsection. Also see Remark \ref{rem:MD}.

Now, we have 
\begin{align}
&\E_{\mathcal{C}_n}[\Pr(\hat{G}_k\neq G_k|\mathcal{C}_n)]\cr
&\leq \sum_{j=1}^k \left( \exp\left\{-(T_k-j+1)n\rho_n^2\lambda^2\left(\frac{1}{2V}-\lambda \rho_n \tau\right)\right\}+\exp\left\{-(T_k-j+1)n(1-\lambda)\rho_n\right\} \right)\\
&\leq \sum_{j=T}^{T_k} \left( \exp\left\{-jn\rho_n^2\lambda^2\left(\frac{1 }{2V}-\lambda \rho_n \tau\right)\right\}+\exp\left\{-jn(1-\lambda)\rho_n\right\}\right) \\
&\leq \frac{\exp\left\{-Tn\rho_n^2\lambda^2\left(\frac{1 }{2V}-\lambda \rho_n \tau\right)\right\}}{1-\exp\{-n\rho_n^2\lambda^2\left(\frac{1 }{2V}-\lambda \rho_n \tau\right)\}}+\frac{\exp\left\{-Tn(1-\lambda)\rho_n\right\}}{1-\exp\left\{-n(1-\lambda)\rho_n\right\}} \label{eqn:MD_sumup}
\end{align}
for sufficiently large $n$, which leads to 
\begin{align}
\limsup_{n\rightarrow \infty} \frac{1}{n\rho_n^2 }\log \E_{\mathcal{C}_n}\left[ \limsup_{N\rightarrow \infty} \frac{\sum_{k=1}^N \Pr(\hat{G}_k\neq G_k|\mathcal{C}_n)}{N}\right]\leq -\frac{T\lambda^2}{2V}.
\end{align}
Finally, by taking $\lambda\rightarrow 1$, we have
\begin{align}
\limsup_{n\rightarrow \infty} \frac{1}{n\rho_n^2 }\log \E_{\mathcal{C}_n}\left[ \limsup_{N\rightarrow \infty} \frac{\sum_{k=1}^N \Pr(\hat{G}_k\neq G_k|\mathcal{C}_n)}{N}\right]\leq -\frac{T}{2V}.
\end{align}
Hence, there must exist a sequence of codes $\mathcal{C}_n$ that satisfies  \eqref{eqn:MD_positiveV}, which completes the proof. \endproof

%%%%%%%%%%%%%%%%%%%%%%%%%%%%%%%%%%%%%%%%
The following lemma used in the proof of Theorem \ref{thm:MD} corresponds to a non-asymptotic upper bound of the moderate deviations theorem \cite[Theorem 3.7.1]{DemboZeitouni:09}, whose proof is in Appendix \ref{appendix:finite_MD}.
\begin{lemma} \label{lemma:nonasymptotic_MD}
Let $\{Z_l\}_{l\geq 1}$ be a sequence of i.i.d.  random variables such that 
$\E[Z_1] = 0$, $\var[Z_1] = \sigma^2>0$, and its cumulant generating function $h(s) := \log\E[\exp\{sZ_1\}]$ for $s\geq 0$ is  analytic around the origin and satisfies that $K:=\max_{s \in [0,1] }  | h'''(s) |$ is finite. 
For a sequence $\eps_n>0$ satisfying  the moderate deviations constraints, i.e.,  $\eps_n\to 0$ and $n\eps_n^2\to\infty$, the following bound holds: 
\begin{align}
\Pr\left(\frac{1}{n}\sum_{l=1}^n Z_l \geq \eps_n\right) &\leq \exp\left\{ -n \left(  \frac{\eps_n^2}{2\sigma^2}  -\frac{\eps_n^3}{6\sigma^6}  K \right)\right\}
\end{align}
for sufficiently large $n$. 
\end{lemma}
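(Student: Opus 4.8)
The plan is to run the standard Chernoff bounding argument, but to keep everything non-asymptotic by expanding the cumulant generating function $h$ to second order around the origin and absorbing the cubic remainder into the uniform bound $K$ on $|h'''|$ over $[0,1]$.

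First I would record the Chernoff bound: since $h(s)=\log\E[\exp\{sZ_1\}]$ is finite on a neighborhood of $0$ (this is what the analyticity hypothesis buys us, so the bound is non-vacuous), Markov's inequality applied to $\exp\{s\sum_{l=1}^n Z_l\}$ and the i.i.d.\ assumption give, for every $s\ge 0$,
\begin{align}
\Pr\left(\frac{1}{n}\sum_{l=1}^n Z_l \ge \eps_n\right) \le \exp\left\{-n\big(s\eps_n - h(s)\big)\right\}.
\end{align}
Next I would Taylor-expand $h$ about $s=0$. Using $h(0)=0$, $h'(0)=\E[Z_1]=0$, and $h''(0)=\var[Z_1]=\sigma^2$, Taylor's theorem with the Lagrange form of the remainder yields, for every $s\in[0,1]$,
\begin{align}
h(s) = \frac{\sigma^2}{2}s^2 + \frac{h'''(\xi)}{6}s^3 \le \frac{\sigma^2}{2}s^2 + \frac{K}{6}s^3
\end{align}
for some $\xi\in[0,s]\subseteq[0,1]$, where the inequality uses $h'''(\xi)\le |h'''(\xi)|\le K$ together with $s^3\ge 0$.

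Then I would make the moderate-deviations choice $s=s_n:=\eps_n/\sigma^2$. Because $\eps_n\to 0$ and $\sigma^2>0$ is fixed, $s_n\in[0,1]$ for all sufficiently large $n$, so the displayed bound on $h$ applies at $s=s_n$, and a short computation gives
\begin{align}
s_n\eps_n - h(s_n) \ge \frac{\eps_n^2}{\sigma^2} - \frac{\eps_n^2}{2\sigma^2} - \frac{K\eps_n^3}{6\sigma^6} = \frac{\eps_n^2}{2\sigma^2} - \frac{\eps_n^3}{6\sigma^6}K.
\end{align}
Substituting this back into the Chernoff bound yields the claimed inequality.

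The argument is essentially routine; the only points I would be careful about are (i) verifying that $s_n$ lands in $[0,1]$ for large $n$ so that the uniform third-derivative bound $K$ is legitimately usable — this is precisely where the hypotheses $\eps_n\to 0$ and finiteness of $K$ over all of $[0,1]$ enter, and it explains the ``for sufficiently large $n$'' qualifier — and (ii) tracking the sign of the cubic remainder so that it subtracts (rather than adds) in the final exponent. One could instead optimize over $s$ exactly, but the explicit near-optimal choice $s_n=\eps_n/\sigma^2$ already delivers the stated bound with the cleanest constants, so I would not bother.
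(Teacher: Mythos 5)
Your proof is correct and follows essentially the same route as the paper's: Chernoff bound via Markov's inequality, third-order Taylor expansion of $h$ with Lagrange remainder using $h(0)=h'(0)=0$ and $h''(0)=\sigma^2$, the choice $s_n=\eps_n/\sigma^2$, and the uniform bound $|h'''|\le K$ on $[0,1]$. Your justification that $s_n\in[0,1]$ for large $n$ (hence $\xi\in[0,1]$ and $|h'''(\xi)|\le K$) is in fact a slightly cleaner phrasing of the paper's remark that $\tils\to 0$, so nothing is missing.
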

%%%%%%%%%%%%%%%%%%%%%%%%%%%%%%%%%%%%%%%%
\begin{remark} \label{rem:MD}
Let us comment on the assumption in Lemma \ref{lemma:nonasymptotic_MD} that $K$ is finite. In our application, 
\begin{equation}
Z_l \equiv  i(X_l;Y_l) - I(X_l;Y_l).
\end{equation}
Then, we have
 \begin{align}
 h (s) &= \log \E \left[ \exp \bigg\{ s\Big( \log\frac{W(Y_1 |X_1)}{P_XW(Y_1)} - I(X_1;Y_1) \Big) \bigg\} \right]\\
  &= -s I(X_1;Y_1) + \log \E \left[  \Big(\frac{W(Y_1 |X_1)}{P_XW(Y_1)}   \Big)^s\right].
 \end{align}
By differentiating thrice, we can show that $h'''(s)$ is continuous in $s$.\footnote{A detailed calculation follows similarly as in the proof of \cite[Lemma 1]{AltugWagner:14}.} Restricting $s$ to $[0,1]$ means that $h'''(s)$ is a continuous function over a compact set. Hence its maximum is attained and is necessarily finite. 
\end{remark}
%%%%%%%%%%%%%%%%%%%%%%%%%%%%%%%%%%%%%%%%

\subsection{Proof of Theorem \ref{thm:CL} for the central limit regime}
Consider a DMC $(\mathcal{X},\mathcal{Y}, \{W(y|x): x\in \mathcal{X}, y\in \mathcal{Y}\} )$ with $V>0$. We remark that in the moderate deviations regime,  for every block, the encoder maps \emph{all} the previous messages to a codeword. For the central limit regime, we propose a coding strategy where the encoder maps only \emph{some} recent  messages to the codeword in each block. %such a coding strategy whose encoding memory size is linear in the block index can cause an accumulation of non-vanishing probabilities of error. Here, 
Similar idea of incorporating truncated memory was used in \cite{DraperKhisti:11} with the focus on reducing the complexity. Here, we use a different memory structure from \cite{DraperKhisti:11}. 
%with encoding memory sublinear in the block index, i.e., 
Let $A\in \bbN$ and $B\in \bbN$ denote the maximum and the minimum numbers of messages that can possibly be mapped to a codeword in each block, respectively. We choose the size  $M_n$ of message alphabet as follows: 
\begin{align}
\log M_n=\frac{A-2B+T+2}{A}(nC-L\sqrt{n}) \label{eqn:CL_msg}
\end{align}
for some $L>0$. To make the above choice of $M_n$ valid, we assume $A\geq 2B-T-2\geq 0$. Furthermore, we assume that the minimum encoding memory is at least $T$, i.e., $B\geq T$. We denote by $P_X$ an input distribution that achieves the dispersion \eqref{eqn:dispersion}.

\subsubsection{Encoding}
Our encoder has a periodically time-varying memory $m \in [B:A]$ with a period of $A-B+1$ blocks, after an initialization step of the first $A$ blocks. 
Let us first describe our message-codeword mapping rule for the case of $A=9$ and $B=4$, which is illustrated in Fig. \ref{fig:mapping}. For the first nine blocks, the encoder maps all the previous messages to a codeword. Since the maximum encoding memory is nine in this example, we \emph{truncate} the messages that are mapped to a codeword on and after the tenth block, so that the encoding memory is periodically time-varying from four to nine with a period of six blocks.  %The first period is the interval $[10: 15]$ of blocks, whose construction is described in the following.  
For instance, let us consider the first period  from the tenth block to the fifteenth block. In the tenth block, the encoder maps the messages $G_{7},\cdots, G_{10}$ to a codeword, thus ensuring that the encoding memory is four. In block $k\in [11: 15]$, the encoder maps the messages $G_{7},\cdots, G_{k}$ to a codeword and hence the encoding memory becomes the maximum memory of nine when $k=15$.
\begin{figure*}[t]
 \centering
  {
  \includegraphics[width=110mm]{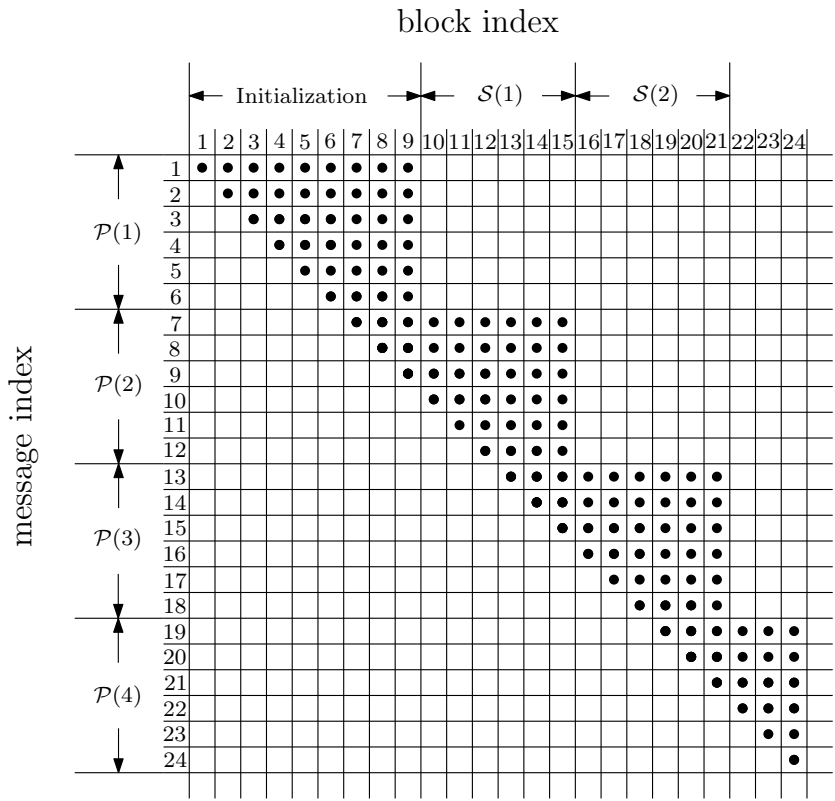}}
  \caption{The proposed message-codeword mapping rule for the central limit regime is illustrated for the case of $A=9$ (maximum encoding memory) and $B=4$ (minimum encoding memory). After an initialization step of the first nine blocks, in which all the previous messages are mapped to a codeword, our encoder has a periodically time-varying memory from four to nine with a period of six blocks. 
} \label{fig:mapping}
\end{figure*}

Now, let us formally describe the encoding procedure for the general case.  
For each $k\in [1:A]$ and $g^k \in \mathcal{G}^{k}$, generate $\mathbf{x}_k(g^k)$ in an i.i.d. manner according to $P_X$. In block $k\in [1:A]$, the encoder sends $\mathbf{x}_k(G^k)$.
Let $\mathcal{S}(q)$ for $q\geq 1$ denote the set of $(A-B+1)$ block indices in the $q$-th period on and after the $(A+1)$-st block, i.e., $\mathcal{S}(q)=\{(A-B+1)q+B, \cdots, (A-B+1)(q+1)+B-1\}$. For each $k\in \mathcal{S}(q)$ and $g^{k-q(A-B+1)} \in \mathcal{G}^{k-q(A-B+1)}$,\footnote{In block $k\in \mathcal{S}(q)$, a total of $k-q(A-B+1)$ messages, i.e., $G_{q(A-B+1)+1},\cdots, G_k$, are mapped to a codeword.} generate $\mathbf{x}_k(g^{k-q(A-B+1)})$ in an i.i.d. manner according to $P_X$. In block $k\in  \mathcal{S}(q)$, the encoder sends $\mathbf{x}_k(G_{[q(A-B+1)+1:k]})$. 

On the other hand, we note that our message-codeword mapping rule is also periodic in the (vertical) axis of message index. We can group the messages according to the maximum block index to which a message is mapped. Let $\mathcal{P}(q)$ for $q\in \bbN$ denote the $q$-th group $\{G_{(A-B+1)(q-1)+1},\cdots,G_{(A-B+1)q}\}$ of messages that are mapped to a codeword up to block $(A-B+1)q+B-1$, which is illustrated in Fig. \ref{fig:mapping} for the example of $A=9$ and $B=4$.  This grouping rule is useful for describing the decoding rule.

\subsubsection{Decoding} 
The decoding rule of $G_k\in \mathcal{P}(1)$ at the end of block $T_k$ is exactly the same as that for the moderate deviations regime. Hence, from now on, let us focus on the decoding of $G_k\in \mathcal{P}(q)$ for $q\geq 2$ at the end of block $T_k$.  At the end of block $T_k$, the decoder decodes not only $G_k$, but also all the messages in the previous group and the previous messages in the current group,\footnote{Similarly as in the moderate deviations regime, $G_j$ for $j\in [1:k-1]$ has been already decoded at  the end of block $T_j$. Nevertheless, the decoder re-decodes some of the previous messages at the end of $T_k$.} i.e., $G_{(A-B+1)(q-2)+1}, \cdots, G_{k-1}$. Let $\hat{G}_{T_k,j}$ denote the estimate of $G_j$ at the end of block $T_k$.

Let us first describe our decoding procedure for the example of $T=2$, $A=9$, and $B=4$ illustrated in Fig.~\ref{fig:mapping}. Consider the decoding of $G_k\in \mathcal{P}(3)=\{G_{13},\cdots, G_{18}\}$ at the end of block $T_k$.\footnote{By using the symmetry of the message-codeword mapping rule, the procedure for decoding $G_k\in \mathcal{P}(q)$ for the cases $q=2$ and $q\geq 4$ can be stated in a similar manner.} The decoder decodes not only $G_k$, but also all the messages $G_7,\cdots, G_{12}$ in $\mathcal{P}(2)$ and the previous messages $G_{13},\cdots,G_{k-1}$ in $\mathcal{P}(3)$. The underlying rules of our decoding procedure can be summarized as follows: 
\begin{itemize}
\item Since messages $G_1,\cdots,G_6$ in $\mathcal{P}(1)$, which we do not want to decode, are involved in blocks $1, \cdots, 9$, we do not utilize the channel output sequences in those blocks for decoding.  
\item For the decoding of the $j$-th message for $j\in [7:k]$, among the channel output sequences from block $10$ to block $T_k$, we utilize the channel output sequences in which the $j$-th message is involved.
\end{itemize}
According to the above rules, the blocks to be considered for the decoding of messages $G_7,\cdots, G_k$ are as follows: 
\begin{enumerate}[(i)]
\item for $G_7, \cdots, G_{10}$, blocks\footnote{We note the last block index to which the messages in $\mathcal{P}(2)$ are involved is $T_k$ if $T_k\leq 15$, and it is $15$ otherwise. In other words, the last block index to which the messages in $\mathcal{P}(2)$ are involved is $\min(T_k, 15)$.} indexed from 10 to $\nu_k:=\min(T_k, 15)$, 
\item for $G_j$ for $j\in [11:12]$, blocks indexed from $j$ to $\nu_k$, and 
\item for $G_{j}$ for $j\in [13:k]$, blocks indexed from $j$ to $T_k$.
\end{enumerate}

 In particular, since the pairs of the first block index and the last block index to be considered for the decoding of messages $G_7, \cdots, G_{10}$ are the same, we decode   $G_7, \cdots, G_{10}$ simultaneously. By keeping this in mind, our decoding procedure for $G_k\in \mathcal{P}(3)$ for the example of $T=2$, $A=9$ and $B=4$ is formally stated as follows: 
\begin{enumerate}[(i)]
\item If there is a unique index vector $g_{[7:10]}$ that satisfies\footnote{Similarly as in the proof of Theorem \ref{thm:CL}, the following notation is used for the set of codewords. Let $\mathcal{K}_j$ for $j\in \bbN$ denote the set of message indices mapped to the $j$-th codeword according to the encoding procedure. For $\mathcal{J}\subseteq \bbN$ and $\mathcal{K}\supseteq \bigcup_{j\in \mathcal{J}} \mathcal{K}_j$, we denote by  $\mathbf{x}_{\mathcal{J}}(g_{\mathcal{K}})$ the set of codewords $\{\mathbf{x}_j(g_{\mathcal{K}_j}): j\in \mathcal{J}\}$. } 
\begin{align}
i(\mathbf{x}_{[10:\nu_k]}(g_{[7:\nu_k]}),\mathbf{y}_{[10:\nu_k]})&>(\nu_k-6)\cdot \log M_n\label{eqn:dec_rule_CL1_toy}
\end{align}
for some $g_{[11:\nu_k]}$, let $\hat{G}_{T_k,[7:10]}=g_{[7:10]}$. If there is none or more than one such $g_{[7:10]}$, let $\hat{G}_{T_k,[7:10]}=(1,\cdots, 1)$.

\item  The decoder sequentially decodes $g_j$ from $j=11$ to $j=12$ as follows: 
\begin{itemize}
\item Given $\hat{G}_{T_k,[7:j-1]}$, the decoder chooses $\hat{G}_{T_k,j}$ according to the following rule. If there is a unique index $g_j\in \mathcal{G}$ that satisfies 
\begin{align}
i(\mathbf{x}_{[j:\nu_k]}(\hat{G}_{T_k,[7:j-1]}, g_{[j:\nu_k]}), \mathbf{y}_{[j:\nu_k)]})&> (\nu_k-j+1) \cdot \log M_n \label{eqn:dec_rule_CL2_toy}
\end{align}
for some $g_{[j+1:\nu_k]}$, let $\hat{G}_{T_k,j}=g_j$. If there is none or more than one such $g_j$, let $\hat{G}_{T_k,j}=1$. 
\item If $j=11$, repeat the above procedure by increasing $j$ to $12$. If $j=12$, proceed to the next decoding procedure.
\end{itemize}

\item  The decoder sequentially decodes $g_j$ from $j=13$ to $j=k$ as follows: 
\begin{itemize}
\item Given $\hat{G}_{T_k, [7:j-1]}$, the decoder chooses $\hat{G}_{T_k, j}$ according to the following rule. If there is a unique index $g_j\in \mathcal{G}$ that satisfies
\begin{align}
i(\mathbf{x}_{[j:T_k]}(\hat{G}_{T_k,[7:j-1]}, g_{[j:T_k]}), \mathbf{y}_{[j:T_k]})&> (T_k-j+1) \cdot \log M_n \label{eqn:dec_rule_CL3_toy}
\end{align}
for some $g_{[j+1:T_k]}$, let $\hat{G}_{T_k,j}=g_j$. If there is none or more than one such $g_j$, let $\hat{G}_{T_k,j}=1$. 
\item If $j<k$, repeat the above procedure by increasing $j$ to $j+1$. If $j=k$, the whole decoding procedure terminates and the decoder declares that the $k$-th message is $\hat{G}_{k}:=\hat{G}_{T_k,k}$.
\end{itemize}
\end{enumerate}

The above description of the decoding procedure for the example in Fig. \ref{fig:mapping} is  naturally extended for the general case. 
In general, the procedure for decoding of $G_k\in \mathcal{P}(q)$ for $q\geq 2$ at the end of block $T_k$  consists of the following three steps: (i) simultaneous non-unique decoding of the first $B$ messages in the previous group, (ii) sequential decoding of the remaining $A-2B+1$ messages in the previous group, and (iii) sequential decoding of the messages in the current group up to the current block.  
Let us describe the decoding rule when $q=2$ in the following: 
\begin{enumerate}[(i)]
\item If there is a unique index vector $g^B$ that satisfies 
\begin{align}
i(\mathbf{x}_{[B:\min(A,T_k) ]}(g^{\min(A,T_k)}),\mathbf{y}_{[B:\min (A,T_k)]})&>\min(A,T_k)\cdot \log M_n\label{eqn:dec_rule_CL1}
\end{align}
for some $g_{[B+1:\min(A,T_k)]}$, let $\hat{G}_{T_k, [1:B]}=g^B$. If there is none or more than one such $g^B$, let $\hat{G}_{T_k,[1:B]}=(1,\cdots,1)$.

\item  The decoder sequentially decodes $g_j$ from $j=B+1$ to $j=A-B+1$ as follows: 
\begin{itemize}
\item Given $\hat{G}_{T_k,[1:j-1]}$, the decoder chooses $\hat{G}_{T_k,j}$ according to the following rule. If there is a unique index $g_j\in \mathcal{G}$ that satisfies 
\begin{align}
i(\mathbf{x}_{[j:\min(A,T_k)]}(\hat{G}_{T_k,[1:j-1]}, g_{[j:\min(A,T_k)]}), \mathbf{y}_{[j:\min(A,T_k)]})&> (\min(A,T_k)-j+1) \cdot \log M_n \label{eqn:dec_rule_CL2}
\end{align}
for some $g_{[j+1:\min(A,T_k)]}$, let $\hat{G}_{T_k,j}=g_j$. If there is none or more than one such $g_j$, let $\hat{G}_{T_k,j}=1$. 
\item If $j<A-B+1$, repeat the above procedure by increasing $j$ to $j+1$. If $j=A-B+1$, proceed to the next decoding procedure.
\end{itemize}

\item  The decoder sequentially decodes $g_j$ from $j=A-B+2$ to $j=k$ as follows: 
\begin{itemize}
\item Given $\hat{G}_{T_k,[1:j-1]}$, the decoder chooses $\hat{G}_{T_k,j}$ according to the following rule. If there is a unique index $g_j\in \mathcal{G}$ that satisfies 
\begin{align}
i(\mathbf{x}_{[j:T_k]}(\hat{G}_{T_k,[1:j-1]}, g_{[j:T_k]}), \mathbf{y}_{[j:T_k]})&> (T_k-j+1) \cdot \log M_n \label{eqn:dec_rule_CL3}
\end{align}
for some $g_{[j+1:T_k]}$, let $\hat{G}_{T_k,j}=g_j$. If there is none or more than one such $g_j$, let $\hat{G}_{T_k,j}=1$. 
\item If $j<k$, repeat the above procedure by increasing $j$ to $j+1$. If $j=k$, the whole decoding procedure terminates and the decoder declares that the $k$-th message is $\hat{G}_{k}:=\hat{G}_{T_k,k}$.
\end{itemize}
\end{enumerate}
By exploiting the symmetry of the message-codeword mapping rule, the decoding rule for $q\geq 3$ proceeds similarly.

\subsubsection{Error analysis}
We first consider the probability of error averaged over random codebook $\mathcal{C}_n$. Let us consider the decoding of $G_k\in \mathcal{P}(2)$.  Let $\alpha:=\min(A,T_k)$. The error event $\{\hat{G}_k\neq G_k\}$ happens only if at least one of the following $2(k-B+1)$ events occurs: 
\begin{align}
\mathcal{E}_{k}^{\mathrm{(i)}}&:= 
\{i(\mathbf{X}_{[B:\alpha]}(G^{\alpha}), \mathbf{Y}_{[B:\alpha]})\leq \alpha\cdot \log M_n\} \\
\tilde{\mathcal{E}}_{k}^{\mathrm{(i)}}&:= 
\{i(\mathbf{X}_{[B:\alpha]}(g^{\alpha}), \mathbf{Y}_{[B:\alpha]})>  \alpha \cdot \log M_n \mbox{ for some } g^{\alpha} \mbox{ such that }g^{B} \neq G^B \} \\
\mathcal{E}_{k,j}^{\mathrm{(ii)}}&:= \{i(\mathbf{X}_{[j:\alpha]}(G^{\alpha}), \mathbf{Y}_{[j:\alpha]})\leq (\alpha-j+1) \cdot \log M_n\} \mbox{ for } j\in [B+1:A-B+1] \\
\tilde{\mathcal{E}}_{k,j}^{\mathrm{(ii)}}&:= \{i(\mathbf{X}_{[j:\alpha]}(G^{j-1},g_{[j:\alpha]}), \mathbf{Y}_{[j:\alpha]})> (\alpha-j+1) \cdot \log M_n \cr
& ~~~~~~~~~~~~~~~~~~ \mbox{ for some }g_{[j:\alpha]} \mbox{ such that }g_{j} \neq G_{j} \} \mbox{ for } j\in [B+1:A-B+1]\\
\mathcal{E}_{k,j}^{\mathrm{(iii)}}&:=  \{i(\mathbf{X}_{[j:T_k]}(G^{T_k}), \mathbf{Y}_{[j:T_k]})\leq (T_k-j+1) \cdot \log M_n\} \mbox{ for } j\in [A-B+2:k]\\
\tilde{\mathcal{E}}_{k,j}^{\mathrm{(iii)}}&:=\{i(\mathbf{X}_{[j:T_k]}(G^{j-1},g_{[j:T_k]}), \mathbf{Y}_{[j:T_k]})> (T_k-j+1) \cdot \log M_n \cr
&  ~~~~~~~~~~~~~~~~~~\mbox{ for some }g_{[j:T_k]} \mbox{ such that }g_{j} \neq G_{j} \} \mbox{ for }j\in [A-B+2:k].
\end{align}

We note that the superscript in each error event represents the decoding step in which the error event is involved. Now, we have 
\begin{align}
\E_{\mathcal{C}_n}[\Pr(\hat{G}_k\neq G_k|\mathcal{C}_n)]
&\leq \Pr(\mathcal{E}_{k}^{\mathrm{(i)}})+\Pr( \tilde{\mathcal{E}}_{k}^{\mathrm{(i)}})+ \sum_{j=B+1}^{A-B+1} \Pr(\mathcal{E}_{k,j}^{\mathrm{(ii)}}) 
+\sum_{j=B+1}^{A-B+1} \Pr( \tilde{\mathcal{E}}_{k,j}^{\mathrm{(ii)}}) \cr
&\quad \quad \quad\quad \quad+\sum_{j=A-B+2}^{k} \Pr( \mathcal{E}_{k,j}^{\mathrm{(iii)}}) 
+\sum_{j=A-B+2}^{k} \Pr( \tilde{\mathcal{E}}_{k,j}^{\mathrm{(iii)}}). \label{eqn:first_case_sum}
\end{align}
Let us bound each term in the RHS of (\ref{eqn:first_case_sum}). First, $\Pr(\mathcal{E}_{k}^{\mathrm{(i)}})$ is upper-bounded as follows: 
\begin{align}
\Pr(\mathcal{E}_{k}^{\mathrm{(i)}})&=\Pr\left(i(\mathbf{X}_{[B:\alpha]}(G^{\alpha}), \mathbf{Y}_{[B:\alpha]})\leq \alpha\cdot \log M_n\right)\\
&\leq \Pr\left(\sum_{l=1}^{n(\alpha-B+1)} i(X_l;Y_l)\leq \alpha \cdot \log M_n\right)\\
&\overset{(a)}{\leq} \Pr\left(\sum_{l=1}^{n(\alpha-B+1)} i(X_l;Y_l)\leq (\alpha-B+1)(nC-L\sqrt{n})\right)\\
&\overset{(b)}{\leq} Q\left(\frac{\sqrt{\alpha-B+1}}{\sqrt{V}}L\right)+\frac{\tau_1}{\sqrt{(\alpha-B+1)n}} \label{eqn:berry_1}
\end{align} 
for some non-negative constant $\tau_1$ that is dependent only on the input distribution $P_X$ and the channel statistics $W(y|x)$, where $(X_l,Y_l)$'s are i.i.d. random variables each generated according to $P_X(x_l)W(y_l|x_l)$, $(a)$ is from the choice of $M_n$ in \eqref{eqn:CL_msg}, and $(b)$ is from the Berry-Esseen Theorem (e.g., \cite{Feller:71}). Similarly, we can show 
\begin{align}
\sum_{j=B+1}^{A-B+1} \Pr(\mathcal{E}_{k,j}^{\mathrm{(ii)}}) &\leq \sum_{j=B+1}^{A-B+1} Q\left(\frac{\sqrt{\alpha-j+1}}{\sqrt{V}}L\right)+\frac{\tau_1}{\sqrt{(\alpha-j+1)n}} \label{eqn:berry_2}
\end{align} 
and
\begin{align}
\sum_{j=A-B+2}^{k} \Pr( \mathcal{E}_{k,j}^{\mathrm{(iii)}}) &\leq\sum_{j=A-B+2}^{k} Q\left(\frac{\sqrt{T_k-j+1}}{\sqrt{V}}L\right)+\frac{\tau_1}{\sqrt{(T_k-j+1)n}}.\label{eqn:berry_3}
\end{align}

Next, $\Pr( \tilde{\mathcal{E}}_{k}^{\mathrm{(i)}})$ is upper-bounded as follows: 
\begin{align}
\Pr( \tilde{\mathcal{E}}_{k}^{\mathrm{(i)}})
&=\Pr(i(\mathbf{X}_{[B:\alpha]}(g^{\alpha}), \mathbf{Y}_{[B:\alpha]})>  \alpha \cdot \log M_n \mbox{ for some } g^{\alpha} \mbox{ such that }g^{B} \neq G^B)\\
&\leq  M_n^{\alpha} \cdot \Pr\left(\sum_{l=1}^{n(\alpha-B+1)} i(X_l;\bar{Y}_l)> \alpha\cdot \log M_n\right)\\
&\overset{(a)}{=} M_n^{\alpha}\cdot \E\left[\exp\left\{-\sum_{l=1}^{n(\alpha-B+1)} i(X_l;Y_l)\right\}\right. \cdot \left.\mathbbm{1}\Big\{\sum_{l=1}^{n(\alpha-B+1)} i(X_l;Y_l)>\alpha \log M_n\Big\}\right]\\
&\overset{(b)}{\leq} \frac{\tau_2}{\sqrt{(\alpha-B+1)n}}
%&\leq O\left(\frac{1}{\sqrt{(L-2L+T+2)n}}\right),
\end{align}
for  some non-negative constant $\tau_2$  that is dependent only on the input distribution $P_X$ and channel statistics $W(y|x)$, where $(X_l,Y_l, \bar{Y}_l)$'s are i.i.d. random variables each generated according to $P_X(x_l)W(y_l|x_l)$ $P_XW(\bar{y}_l)$, $(a)$ is due to an elementary chain of  equalities given in Appendix \ref{appendix:chain}, and  $(b)$ is from \cite[Lemma 47]{PolyanskiyPoorVerdu:10}.

Similarly, we can show
\begin{align}
\sum_{j=B+1}^{A-B+1} \Pr( \tilde{\mathcal{E}}_{k,j}^{\mathrm{(ii)}})\leq \sum_{j=B+1}^{A-B+1} \frac{\tau_2}{\sqrt{(\alpha-j+1)n}}
\end{align}
and 
\begin{align}
\sum_{j=A-B+2}^{k} \Pr( \tilde{\mathcal{E}}_{k,j}^{\mathrm{(iii)}}) \leq \sum_{j=A-B+2}^{k} \frac{\tau_2}{\sqrt{(T_k-j+1)n}} .
\end{align}

By substituting the above bounds into the RHS of \eqref{eqn:first_case_sum}, we obtain
\begin{align}
&\E_{\mathcal{C}_n}[\Pr(\hat{G}_k\neq G_k|\mathcal{C}_n)]\cr
&\leq \sum_{j=B}^{A-B+1} \left(Q\left(\frac{\sqrt{\alpha-j+1}}{\sqrt{V}}L\right)+\frac{\tau_1+\tau_2}{\sqrt{(\alpha-j+1)n}} \right)\cr
&\qquad\qquad+\sum_{j=A-B+2}^{k} \left(Q\left(\frac{\sqrt{T_k-j+1}}{\sqrt{V}}L\right)+\frac{\tau_1+\tau_2}{\sqrt{(T_k-j+1)n}}\right) \\
&\leq \sum_{j=\alpha-A+B}^{\alpha-B+1} \left(Q\left(\frac{\sqrt{j}}{\sqrt{V}}L\right)+\frac{\tau_1+\tau_2}{\sqrt{jn}} \right)+\sum_{j=T}^{T_k-A+B-1} \left(Q\left(\frac{\sqrt{j}}{\sqrt{V}}L\right)+\frac{\tau_1+\tau_2}{\sqrt{jn}}\right) \label{eqn:sumall}\\
&\overset{(a)}{\leq} \sum_{j=B}^{A-B+1} \left(Q\left(\frac{\sqrt{j}}{\sqrt{V}}L\right)+\frac{\tau_1+\tau_2}{\sqrt{jn}} \right)+\sum_{j=T}^{A-B+T} \left(Q\left(\frac{\sqrt{j}}{\sqrt{V}}L\right)+\frac{\tau_1+\tau_2}{\sqrt{jn}}\right),\label{eqn:sumall_ub}
\end{align}
where $(a)$ is because if $\alpha=T_k$, which implies $T_k\leq A$, the RHS of \eqref{eqn:sumall} is upper-bounded as follows: 
\begin{align}
\mbox{RHS of \eqref{eqn:sumall}}&=\sum_{j=T}^{T_k-B+1} \left(Q\left(\frac{\sqrt{j}}{\sqrt{V}}L\right)+\frac{\tau_1+\tau_2}{\sqrt{jn}}\right)\\
&\leq\sum_{j=T}^{A-B+1} \left(Q\left(\frac{\sqrt{j}}{\sqrt{V}}L\right)+\frac{\tau_1+\tau_2}{\sqrt{jn}}\right), 
\end{align}
and if $\alpha=A$, which implies $A\leq T_k$, the RHS of \eqref{eqn:sumall} is upper-bounded as follows:
\begin{align}
\mbox{RHS of \eqref{eqn:sumall}}&=\sum_{j=B}^{A-B+1} \left(Q\left(\frac{\sqrt{j}}{\sqrt{V}}L\right)+\frac{\tau_1+\tau_2}{\sqrt{jn}} \right)+\sum_{j=T}^{T_k-A+B-1} \left(Q\left(\frac{\sqrt{j}}{\sqrt{V}}L\right)+\frac{\tau_1+\tau_2}{\sqrt{jn}}\right)\\
&\leq \sum_{j=B}^{A-B+1} \left(Q\left(\frac{\sqrt{j}}{\sqrt{V}}L\right)+\frac{\tau_1+\tau_2}{\sqrt{jn}} \right)+\sum_{j=T}^{A-B+T} \left(Q\left(\frac{\sqrt{j}}{\sqrt{V}}L\right)+\frac{\tau_1+\tau_2}{\sqrt{jn}}\right). 
\end{align}
Now, the RHS of \eqref{eqn:sumall_ub} is bounded as follows:
\begin{align}
\mbox{RHS of \eqref{eqn:sumall_ub}}&=\sum_{j=B}^{A-B+1} \left(Q\left(\frac{\sqrt{j}}{\sqrt{V}}L\right)+\frac{\tau_1+\tau_2}{\sqrt{jn}} \right)+\sum_{j=T}^{A-B+T} \left(Q\left(\frac{\sqrt{j}}{\sqrt{V}}L\right)+\frac{\tau_1+\tau_2}{\sqrt{jn}}\right)\cr
&\overset{(a)}{\leq} \sum_{j=B}^{A-B+1} Q\left(\frac{\sqrt{j}}{\sqrt{V}}L\right)  +\sum_{j=T}^{A-B+T} Q\left(\frac{\sqrt{j}}{\sqrt{V}}L\right)+ 4(\tau_1+\tau_2) \sqrt{\frac{A-B+T}{n}}\\
&\overset{(b)}{\leq} \frac{\sqrt{V}}{\sqrt{2\pi B}L}\cdot \frac{\exp\left\{-\frac{L^2B}{2V}\right\}}{1-\exp\left\{-\frac{L^2}{2V}\right\}} +\sum_{j=T}^{A-B+T} Q\left(\frac{\sqrt{j}}{\sqrt{V}}L\right)+ 4(\tau_1+\tau_2) \sqrt{\frac{A-B+T}{n}} \label{eqn:beforeAB}
\end{align}
where $(a)$ is from Lemma \ref{lemma:sum_integral} (with the identification of $f(j) \equiv  1/\sqrt{ j } $), which is relegated to the end of this subsection, and $(b)$ is obtained by applying similar steps as in the proof of Corollary \ref{coro:CL_asymp}.\footnote{Step $(b)$ can be obtained by replacing $T$ by $B$ in the RHS of \eqref{eqn:cf}.}

Now let us choose $A=n^{1-\delta}$ and $B=\frac{V}{L^2}\delta \log n$ for $0<\delta<\frac{1}{2}$. By substituting this choice of $A$ and $B$ into the RHS of \eqref{eqn:CL_msg} and  the RHS of \eqref{eqn:beforeAB}, we obtain
\begin{align}
\log M_n=nC-L\sqrt{n}+O(n^{\delta}\log n)
\end{align}
and
\begin{align}
\E_{\mathcal{C}_n}[\Pr(\hat{G}_k\neq G_k|\mathcal{C}_n)]\leq \sum_{j=T}^{\infty} Q\left(\frac{\sqrt{j}}{\sqrt{V}}L\right)+O(n^{-\delta/2}), \label{eqn:ensemble_CL}
\end{align}
respectively. Due to the symmetry of the decoding procedure, the bound \eqref{eqn:ensemble_CL} holds for $G_k\in \mathcal{P}(q)$ for $q\geq 3$. For $G_k\in \mathcal{P}(1)$, by defining the error events in the same way as for the moderate deviations regime and then applying similar bounding techniques used in the above, it can be verified that 
\begin{align}
\E_{\mathcal{C}_n}[\Pr(\hat{G}_k\neq G_k|\mathcal{C}_n)]&\leq \sum_{j=T}^{T_k} Q\left(\frac{\sqrt{j}}{\sqrt{V}}L\right)+\frac{\tau_1+\tau_2}{\sqrt{jn}}\\
&\leq  \sum_{j=T}^{A-B+T} Q\left(\frac{\sqrt{j}}{\sqrt{V}}L\right)+\frac{\tau_1+\tau_2}{\sqrt{jn}} \\
&\leq \sum_{j=T}^{\infty} Q\left(\frac{\sqrt{j}}{\sqrt{V}}L\right)+O(n^{-\delta/2}).
\end{align}
Hence, there must exist a sequence of codes $\mathcal{C}_n$ that satisfies \eqref{eqn:CL_rate} and \eqref{eqn:CL_err}, which completes the proof. \endproof

The following basic lemma is used in the proof of Theorem \ref{thm:CL}, whose proof is omitted. 
\begin{lemma} \label{lemma:sum_integral}
Assume two integers $a$ and $b$ such that $a\leq b$. If $f(x)$ is monotonically decreasing and integrable on $[a,b]$, we have 
\begin{align}
\sum_{j=a}^b f(j) &\leq \int^{b+1}_a f(x-1)dx\\
&=F(b)-F(a-1),
\end{align}
where $F(x)$ denotes the antiderivative of $f(x)$.  
\end{lemma}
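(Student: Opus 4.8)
The statement to prove is Lemma~\ref{lemma:sum_integral}, a standard sum-versus-integral comparison for monotonically decreasing functions.

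\medskip

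\textbf{Plan of proof.} The plan is to exploit monotonicity termwise. Since $f$ is monotonically decreasing on $[a,b]$, for each integer $j$ with $a\le j\le b$ and each real $x\in[j,j+1]$, we have $f(j)\le f(x-1+1)$... more precisely, for $x$ in the interval $[j,j+1]$ the argument $x-1$ ranges over $[j-1,j]$, so $f(x-1)\ge f(j)$ because $x-1\le j$ and $f$ is decreasing. Hence
\begin{align}
f(j)=\int_j^{j+1} f(j)\,dx \le \int_j^{j+1} f(x-1)\,dx.
\end{align}
First I would establish this single-term bound carefully, noting that integrability of $f$ on $[a,b]$ guarantees integrability of $x\mapsto f(x-1)$ on $[a,b+1]$ (it is a shift of $f$, and monotone functions are Riemann integrable on closed bounded intervals in any case).

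\medskip

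Next I would sum the termwise inequality over $j$ from $a$ to $b$. Because the intervals $[j,j+1]$ for consecutive integers $j$ tile $[a,b+1]$ with overlaps only at endpoints (measure zero), additivity of the integral gives
\begin{align}
\sum_{j=a}^{b} f(j) \le \sum_{j=a}^{b}\int_j^{j+1} f(x-1)\,dx = \int_a^{b+1} f(x-1)\,dx.
\end{align}
Finally, a change of variable $u=x-1$ turns the right-hand side into $\int_{a-1}^{b} f(u)\,du = F(b)-F(a-1)$, where $F$ is any antiderivative of $f$, which is the claimed identity.

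\medskip

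\textbf{Main obstacle.} There is essentially no obstacle here: the only points requiring a word of care are (i) the direction of the monotonicity bound — one must shift the argument by $1$ so that $f(x-1)\ge f(j)$ rather than $\le$, which is exactly why the statement integrates $f(x-1)$ over $[a,b+1]$ rather than $f(x)$ over $[a,b]$ — and (ii) ensuring the hypotheses (monotone decreasing, integrable) are invoked precisely where needed. In the paper's application one takes $f(j)\equiv 1/\sqrt{j}$, which is manifestly decreasing and integrable on any $[a,b]\subset(0,\infty)$, with antiderivative $F(x)=2\sqrt{x}$, so the lemma applies directly.
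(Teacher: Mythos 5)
The paper states this lemma with its proof omitted, and your argument is the standard one the authors clearly had in mind: the termwise bound $f(j)=\int_j^{j+1}f(j)\,dx\le\int_j^{j+1}f(x-1)\,dx$ (valid since $x-1\le j$ and $f$ is decreasing), summed over the tiling of $[a,b+1]$, followed by the substitution $u=x-1$. Your proof is correct, and you rightly note the only point needing care, namely that integrability on the shifted interval $[a-1,b]$ is guaranteed because monotone functions are Riemann integrable on closed bounded intervals.
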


%%%%%%%%%%%%%%%%%%%%%%%%%%%%%%%%%%%%%%%%%%%%%
\section{Extensions in the Moderate Deviations Regime} \label{sec:extension}
In this section, we explore interesting variations of the basic streaming setup in Section \ref{sec:model}. For the brevity of the results, we focus on the moderate deviations regime. 

\subsection{Decoding with an erasure option}
Consider the scenario where there is an erasure option at the decoder, i.e., the decoder can output an erasure symbol instead of a message estimate. 
In the presence of an erasure option, there are two types of error events: (i) the decoder declares an erasure and (ii) the decoder outputs an incorrect message, not an erasure. In many applications, the undetected error (the latter event) is more undesirable than an erasure (the former event).
In the following, we define a streaming code with an erasure option by taking into account the undetected error and the total error probabilities separately. 
\begin{definition}[Streaming code with an erasure option]	
An $(n,M,\epsilon, \epsilon',T)$-streaming code with an erasure option consists of 
\begin{itemize}
\item a sequence of messages $\{G_k\}_{k\geq 1}$ each distributed uniformly over $\mathcal{G}:= [1:M]$,
\item a sequence of encoding functions $\phi_k: \mathcal{G}^k\rightarrow \mathcal{X}^n$ that maps the message sequence $G^k\in  \mathcal{G}^k$ to the channel input codeword $\mathbf{X}_k \in \mathcal{X}^n$, and
\item a sequence of decoding functions $\psi_k: \mathcal{Y}^{(k+T-1)n}\rightarrow \mathcal{G}\cup \{0\}$ that maps the channel output sequences $\mathbf{Y}^{k+T-1} \in \mathcal{Y}^{(k+T-1)n}$ to a message estimate $\hat{G}_k \in \mathcal{G}$ or an erasure symbol $\hat{G}_k = 0$,
\end{itemize}
that satisfies  
\begin{align}
\limsup_{N\rightarrow \infty}\sum_{k=1}^N\frac{\Pr(\hat{G}_k\neq G_k)}{N} \leq \epsilon,
\end{align}
i.e., the total error probability does not exceed $\epsilon$, and 
\begin{align}
\limsup_{N\rightarrow \infty}\sum_{k=1}^N\frac{\Pr(\hat{G}_k\neq G_k, \hat{G}_k\neq 0)}{N} \leq \epsilon',
\end{align}
i.e., the  undetected error probability does not exceed $\epsilon'$. 
\end{definition}

The following theorem presents upper bounds on the undetected error and the total error probabilities. The proof of this theorem is provided in Appendix \ref{appendix:erasure}.  
\begin{theorem} \label{thm:erasure}
Consider a DMC $(\mathcal{X},\mathcal{Y}, \{W(y|x): x\in \mathcal{X}, y\in \mathcal{Y}\} )$ with $V>0$ and any sequence of integers $M_n$ such that $\log M_n=nC-n\rho_n$, where $\rho_n>0, \rho_n\rightarrow 0$ and $n\rho_n^2\rightarrow \infty$. For any $0<\gamma<1$, there exists a sequence of $(n, M_n, \epsilon_n, \epsilon'_n, T)$-streaming codes with an erasure option such that 
\begin{align}
\limsup_{n\rightarrow \infty} \frac{1}{n\rho_n^2} \log \epsilon_n &\leq -\frac{T(1-\gamma)^2}{2V}. \label{eqn:eras_tot}\\
\limsup_{n\rightarrow \infty} \frac{1}{n\rho_n} \log \epsilon'_n &\leq -T\gamma. \label{eqn:eras_und}
\end{align}
\end{theorem}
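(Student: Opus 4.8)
\textbf{Proof proposal for Theorem \ref{thm:erasure}.}

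The plan is to reuse the encoding scheme from the proof of Theorem \ref{thm:MD} verbatim --- i.i.d.\ codewords $\mathbf{x}_k(g^k)\sim P_X$ with $P_X$ achieving the dispersion --- and to modify only the decoding rule by inserting a threshold test with an erasure option. Specifically, when decoding $G_j$ (sequentially from $j=1$ to $j=k$, re-decoding all earlier messages as before) at the end of block $T_k$, instead of requiring merely that the accumulated information density exceed $(T_k-j+1)\log M_n$, I would require the stronger condition that it exceed $(T_k-j+1)(\log M_n + n\gamma\rho_n)$ for a \emph{unique} candidate $g_j$ (with some completion $g_{[j+1:T_k]}$); if no such unique candidate exists at any stage $j\le k$, the decoder outputs the erasure symbol $\hat G_k = 0$. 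Raising the threshold by the slack $n\gamma\rho_n$ per block is what buys the improved undetected-error exponent, at the cost of a slightly degraded total-error exponent, exactly as in the non-streaming analysis of Hayashi and Tan \cite{HayashiTan:15}.

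The error analysis then splits the event $\{\hat G_k\neq G_k\}$ into the union over $j\in[1:k]$ of two families of events, parallel to \eqref{eqn:err1}--\eqref{eqn:err2}. The first (contributing to the total error) is that the true codewords fail the raised threshold, i.e.\ $\sum_{l} i(X_l;Y_l)\le (T_k-j+1)n(C-\rho_n+\gamma\rho_n) = (T_k-j+1)n(C-(1-\gamma)\rho_n)$ over $n(T_k-j+1)$ symbols; applying the DT-type identity \cite[Eq.~(69)]{PolyanskiyPoorVerdu:10} together with the splitting inequality \cite[Eq.~(53)-(56)]{PolyanskiyVerdu:10} and then Lemma \ref{lemma:nonasymptotic_MD} with $\rho_n$ replaced by $(1-\gamma)\rho_n$ gives a bound of the form $\exp\{-(T_k-j+1)n\rho_n^2(1-\gamma)^2(\tfrac{1}{2V}-o(1))\}$ plus a lower-order term, for each $j$. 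Summing the geometric series over $j$ (equivalently over block-lengths $T,T+1,\dots,T_k$ as in \eqref{eqn:MD_sumup}) the $j=T$ term dominates, yielding \eqref{eqn:eras_tot}. The second family (the undetected error: a wrong $g_j\neq G_j$ is accepted) is controlled by a union bound over the $M_n^{T_k-j+1}$ wrong codeword-tuples against the event $\sum_l i(X_l;\bar Y_l) > (T_k-j+1)(\log M_n + n\gamma\rho_n)$, which by the change-of-measure identity incurs a factor $\exp\{-(T_k-j+1)(\log M_n + n\gamma\rho_n)\}$ against $M_n^{T_k-j+1}$, leaving $\exp\{-(T_k-j+1)n\gamma\rho_n\}$; summing over $j$ this is again a geometric series dominated by its first term, giving $\limsup \tfrac{1}{n\rho_n}\log\epsilon'_n \le -T\gamma$, which is \eqref{eqn:eras_und}. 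Finally, averaging over the random codebook and invoking the limit-superior-of-block-averages criterion in the code definition, one extracts a deterministic sequence of codes meeting both bounds.

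The main obstacle --- and the one worth spelling out carefully in the appendix --- is ensuring the \emph{uniformity} of these bounds over the growing number $2k$ of error events and over the block index $k$, so that $\limsup_{N\to\infty}\frac1N\sum_{k=1}^N \Pr(\hat G_k\neq G_k\mid\mathcal{C}_n)$ (and its undetected-error analogue) stays controlled. This is precisely the role played by the non-asymptotic, uniform version of the moderate deviations bound in Lemma \ref{lemma:nonasymptotic_MD}: because the per-$j$ bound depends on $j$ and $k$ only through the block-count $T_k-j+1\ge T$, the sum telescopes into a convergent geometric series whose value is $O(\exp\{-Tn\rho_n^2/(2V)\cdot(1+o(1))\})$ uniformly in $k$, hence the block-average inherits the same exponent. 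A secondary point requiring care is that the slack $\gamma\rho_n$ must still be $o(1)$ and compatible with the moderate deviations constraint $n\rho_n^2\to\infty$ after the substitution $\rho_n\mapsto(1-\gamma)\rho_n$, which holds since $0<\gamma<1$ is fixed; and one should check that the lower-order $\exp\{-(T_k-j+1)n(1-\lambda)(1-\gamma)\rho_n\}$-type terms arising from the splitting step \eqref{eqn:spr} are dominated by the main $\exp\{-\Theta(n\rho_n^2)\}$ term for large $n$, which is immediate because $\rho_n\to 0$.
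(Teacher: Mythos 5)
Your proposal is substantially the same as the paper's proof in Appendix E: identical random codebook, decoding threshold raised by $\gamma n\rho_n$ per block with erasure declared on ambiguity, miss/false-alarm decomposition into events $\mathcal{E}_{k,j}'$ and $\tilde{\mathcal{E}}_{k,j}'$, geometric summation over the window index $j$, and a derandomization step. Two places where the paper's argument is more direct and where you should be careful if you write out the details. First, for the total-error bound the paper does \emph{not} invoke the DT identity or the $\lambda$-splitting step --- those are needed in the proof of Theorem~\ref{thm:MD} where the threshold sits exactly at $\log M_n$. In the erasure proof the raised threshold already plays the role of the split point: $\Pr(\mathcal{E}_{k,j}')$ is bounded \emph{directly} by Lemma~\ref{lemma:nonasymptotic_MD} with effective deviation $(1-\gamma)\rho_n$, and $\Pr(\tilde{\mathcal{E}}_{k,j}')$ is bounded \emph{separately} by the change-of-measure identity, so no $\lambda\to 1$ limit appears. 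If you insist on routing through the DT identity, the false-alarm contribution inside the DT expectation is inflated by $\exp\{(T_k-j+1)\gamma n\rho_n\}$ relative to the actual $M_n^{T_k-j+1}$ count, and you must then apply the splitting inequality and send $\lambda\to 1$ to recover the same exponent --- it works, but it obscures the mechanism and adds an unneeded parameter. Second, ``averaging over the random codebook'' alone does not yield a single deterministic codebook satisfying \emph{both} \eqref{eqn:eras_tot} and \eqref{eqn:eras_und} simultaneously: the paper applies Markov's inequality separately to the codebook-averaged block-average total-error and undetected-error probabilities (each giving failure probability $<1/2$) and then takes a union bound, a device borrowed from \cite[Theorem~1]{HayashiTan:15}. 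Your sketch acknowledges that a code ``meeting both bounds'' must be extracted, but this simultaneous Markov/union-bound step should be stated explicitly.
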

Theorem \ref{thm:erasure} indicates that for our proposed scheme, the undetected error probability decays much faster than the total error probability, i.e., the exponent of the undetected error probability is the order of $n\rho_n$, whereas that of the total error probability is the order of $n\rho_n^2$. We note that when $T=1$ and $\rho_n=an^{-t}$ for $a>0$ and $0<t<1/2$, Theorem \ref{thm:erasure} reduces to  \cite[Theorem 1]{HayashiTan:15}. In the streaming setup, both the exponents of the total error and the undetected error probabilities improve over the block coding or non-streaming setup in \cite[Theorem 1]{HayashiTan:15} by factors of $T$.

\subsection{Decoding with average delay constraint}
We note that the decoding delay is assumed to be fixed to $T$ up to this point. In this subsection, we relax this constraint by requiring the \emph{average}  decoding delay not to exceed $T$. 
A streaming code with average delay constraint is defined as follows: 
\begin{definition}[Streaming code with average delay constraint]	
An $(n,M,\epsilon, T)$-streaming code with average delay constraint consists of 
\begin{itemize}
\item a sequence of messages $\{G_k\}_{k\geq 1}$ each distributed uniformly over $\mathcal{G}:= [1:M]$,
\item a sequence of encoding functions $\phi_k: \mathcal{G}^k\rightarrow \mathcal{X}^n$ that maps the message sequence $G^k\in  \mathcal{G}^k$ to the channel input codeword $\mathbf{X}_k \in \mathcal{X}^n$, and
\item a sequence of decoding functions $\psi_k: \mathcal{Y}^{kn}\rightarrow (\mathcal{G}\cup \{0\})^k$ that maps the channel output sequences $\mathbf{Y}^{k} \in \mathcal{Y}^{kn}$  to a message estimate $\hat{G}_{k,j} \in \mathcal{G}$ or an erasure symbol $\hat{G}_{k,j} = 0$ for every $j\in [1:k]$
\end{itemize}
 that satisfies 
\begin{align}
\limsup_{N\rightarrow \infty}\sum_{k=1}^N\frac{\Pr(\hat{G}_{k+D_k-1,k}\neq G_{k})}{N} \leq \epsilon 
\end{align} and 
\begin{align}
\limsup_{N\rightarrow \infty}\sum_{k=1}^N\frac{\E[D_k]}{N}\leq T,
\end{align}
where $D_k:=\min\{d: \hat{G}_{k+d-1,k}\neq 0 \}$ for $k\in \bbN$ denotes the (random) decoding delay of the $k$-th message.\footnote{Note that message $G_k$ is required to be decoded at the end of every block on and after the $k$-th block in this definition. One may wonder why the decoder does not stop decoding $G_k$ after it outputs an estimate of $G_k$, not an erasure. We note that our definition includes such a operation as a special case by letting the decoder simply fix the estimate of $G_k$ once it outputs a message estimate. 
}  
\end{definition}
For block channel coding with feedback, it is known that the error exponent can be significantly improved by allowing variable decoding delay, e.g., \cite{Forney:68}.  For streaming setup, the following theorem, which is proved in Appendix~\ref{appendix:variable}, shows that such an improvement can be obtained in the absence of feedback.  
\begin{theorem} \label{thm:variable}
Consider a DMC $(\mathcal{X},\mathcal{Y}, \{W(y|x): x\in \mathcal{X}, y\in \mathcal{Y}\} )$ with $V>0$ and any sequence of integers $M_n$ such that $\log M_n=nC-n\rho_n$, where $\rho_n>0, \rho_n\rightarrow 0$ and $n\rho_n^2\rightarrow \infty$. For any $T\in \bbN$, there exists a sequence of $(n, M_n, \epsilon_n, T_n)$-streaming codes with average delay constraint such that 
\begin{align}
\lim_{n\rightarrow \infty} T_n&=T\\
\limsup_{n\rightarrow \infty} \frac{1}{n\rho_n} \log \epsilon_n &\leq -T. \label{eqn:var}
\end{align}
\end{theorem}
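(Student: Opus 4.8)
The plan is to combine the erasure-decoding scheme of Theorem \ref{thm:erasure} with a retransmission/ACK-free sequential decoding idea so that the decoding delay is $T$ with high probability and only occasionally larger. First I would fix a small auxiliary parameter $\gamma\in(0,1)$ and run essentially the scheme of Theorem \ref{thm:erasure}: the encoder jointly encodes all past and fresh messages as in Section \ref{subsec:MD_pf}, and for each message $G_k$ the decoder applies, at the end of block $k+T-1$, an erasure-capable threshold test (the same test that gives the undetected-error exponent $-T\gamma$ and total-error exponent $-T(1-\gamma)^2/2$). The key point is that the decoder is allowed to keep re-attempting: if at the end of block $k+T-1$ the test for $G_k$ returns an erasure, the decoder simply waits and re-runs the (now longer) threshold test at the end of blocks $k+T$, $k+T+1$, and so on, using all channel outputs received so far, until the test first returns a non-erasure symbol. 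Thus $D_k = \min\{d : \hat G_{k+d-1,k}\neq 0\}$ and the nominal delay is $T$.

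The key steps, in order: (1) Show the undetected error probability is controlled: $\Pr(\hat G_{k+D_k-1,k}\neq G_k) \le \Pr(\hat G_{k+D_k-1,k}\neq G_k, \hat G_{k+D_k-1,k}\neq 0)$, and by a union bound over the (at most countably many) re-attempt times together with the exponentially decaying per-attempt undetected-error bound from the analysis of Theorem \ref{thm:erasure} (where at attempt $d$ the relevant sum of information densities spans $n(d)$ channel uses with $d\ge T$), this is at most a geometric series dominated by $\exp\{-T n\rho_n \gamma (1+o(1))\}$; taking $\gamma\to 1$ yields the exponent $-T$ in \eqref{eqn:var}. (2) Show the average delay is close to $T$: write $\E[D_k] = T + \sum_{d\ge T}\Pr(D_k > d)$, and bound $\Pr(D_k>d)$ by the probability that the erasure test has failed to produce an output through block $k+d-1$, which is at most the "erasure'' probability of the threshold test on $n(d)$ symbols. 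From the analysis behind \eqref{eqn:eras_tot} this decays like $\exp\{-d n\rho_n^2 (1-\gamma)^2/(2V)(1+o(1))\}$, so $\sum_{d\ge T}\Pr(D_k>d) = o(1)$ as $n\to\infty$, whence $T_n := \limsup_N \frac1N\sum_k \E[D_k] \to T$. (3) As in the proof of Theorem \ref{thm:MD}, pass from the random-codebook ensemble average to the existence of a single deterministic code, using that both the averaged error and the averaged delay satisfy the desired bounds; a standard argument (e.g.\ the probabilistic method applied to a suitable linear combination, or expurgation) gives a code meeting both \eqref{eqn:var} and $\lim_n T_n = T$ simultaneously.

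The main obstacle I anticipate is step (2): making rigorous that the re-attempts can be made so that the "extra'' delay contributes only $o(1)$ to $T_n$ uniformly over the block index $k$, and simultaneously that the union bound over infinitely many re-attempt times in step (1) does not destroy the exponent. Both require the per-attempt bounds to decay geometrically in the attempt index $d$ with a rate that does not degrade with $k$ — which is exactly the kind of uniform, non-asymptotic moderate-deviations control provided by Lemma \ref{lemma:nonasymptotic_MD} and the DT-type identity used in Section \ref{subsec:MD_pf}. A secondary subtlety is that the decoder, when re-decoding $G_k$ at a later block, must also (re-)decode the intervening messages $G_{[1:k-1]}$ to form the codeword hypotheses, exactly as in the fixed-delay scheme; one must check that the error contributions of these auxiliary decodings are absorbed into the same geometric sums, which follows because their threshold tests involve even longer blocklengths and hence even smaller error probabilities.
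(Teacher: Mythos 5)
Your proposal is correct and follows essentially the same route as the paper: reuse the erasure-threshold decoder from Theorem~\ref{thm:erasure}, re-attempt decoding at blocks $k+d-1$ for $d\ge T$ until a non-erasure is returned, bound the error by a geometric series of the per-attempt undetected-error exponents $\exp\{-dn\gamma\rho_n\}$, bound the excess delay $\E[D_k]-T$ via the per-attempt erasure (total-error) exponents $\exp\{-dn\rho_n^2(1-\gamma)^2/(2V)\}$ (which vanishes since $n\rho_n^2\to\infty$), derandomize both bounds simultaneously by Markov's inequality plus a union bound, and take $\gamma\to 1$. The only cosmetic difference is that you write the excess delay as $\sum_{d\ge T}\Pr(D_k>d)$ while the paper writes the equivalent weighted sum $\sum_{d\ge T+1}(d-T)\Pr(D_k=d)$.
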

We note that the exponent of the error probability $\epsilon_n$ is of the order $n\rho_n$ (instead of $n\rho_n^2$ as in \eqref{eqn:eras_tot}), and hence it is improved tremendously by allowing variable decoding delay. 

\subsection{Alternating message rates}
We note that the rates of the messages are assumed to be fixed across time thus far. In many practical streaming applications, however, a stream of data packets does not have a constant rate. For example, in the MPEC standard for video coding,  I frames have higher rates than P frames in general.\footnote{I frame is an intra-coded picture that is coded using only information present in the picture itself and P frame is a predictive-coded picture that is coded with inter-frame prediction from previous frames.} Similarly, in audio coding, voice packets have higher rates than silent packets. In this subsection, to obtain useful insights when the message rates vary across time, we assume a simple example where the rate of the messages in odd block indices and the rate of the messages in even block indices are different. A streaming code with alternating message rates is defined as follows: 
\begin{definition}[Streaming code with alternating message rates]	
An $(n,M,r,\epsilon_1, \epsilon_2,T)$-streaming code with alternating message rates\footnote{It is assumed that $0<r<1$.} consists of 
\begin{itemize}
\item a sequence of messages $\{G_k\}_{k\geq 1}$ where message $G_{2j-1}$ for $j\in \bbN$ is distributed uniformly over $\mathcal{G}_1:= [1:M^r]$ and message $G_{2j}$ for $j\in \bbN$ is distributed uniformly over $\mathcal{G}_2:= [1:M^{2-r}]$,
\item a sequence of encoding functions $\phi_k: \mathcal{G}_1^{\lceil k/2 \rceil}\times \mathcal{G}_2^{\lfloor k/2 \rfloor } \rightarrow \mathcal{X}^n$ that maps the message sequence $\{G_{2j-1}:j\in [1: \lceil k/2 \rceil] \} \cup \{G_{2j}: j\in [1:\lfloor k/2 \rfloor] \} \in  \mathcal{G}_1^{\lceil k/2 \rceil}\times \mathcal{G}_2^{\lfloor k/2 \rfloor }$ to the channel input codeword $\mathbf{X}_k \in \mathcal{X}^n$, and
\item a sequence of decoding functions $\psi_k: \mathcal{Y}^{(k+T-1)n}\rightarrow \mathcal{G}_{2-(k\mathrm{~mod}2) }$ that maps the channel output sequences $\mathbf{Y}^{k+T-1} \in \mathcal{Y}^{(k+T-1)n}$ to the message estimate $\hat{G}_k\in \mathcal{G}_{2-(k\mathrm{~mod}2)}$,
\end{itemize}
that satisfies  
\begin{align}
\limsup_{N\rightarrow \infty}\sum_{j=1}^N\frac{\Pr(\hat{G}_{2j-1}\neq G_{2j-1})}{N} \leq \epsilon_1,
\end{align}
and 
\begin{align}
\limsup_{N\rightarrow \infty}\sum_{j=1}^N\frac{\Pr(\hat{G}_{2j}\neq G_{2j})}{N} \leq \epsilon_2.
\end{align}
\end{definition}
We note that the error probabilities are considered separately for the messages in the odd block indices and for the messages in the even block indices. The following theorem, which is proved in Appendix \ref{appendix:alternate}, gives achievability bounds when the message rates are alternating.

\begin{theorem} \label{thm:alternate}
Consider a DMC $(\mathcal{X},\mathcal{Y}, \{W(y|x): x\in \mathcal{X}, y\in \mathcal{Y}\} )$ with $V>0$ and any sequence of integers $M_n$ such that $\log M_n=nC-n\rho_n$, where $\rho_n>0, \rho_n\rightarrow 0$ and $n\rho_n^2\rightarrow \infty$. For any $r\in (0,1)$, there exists a sequence of $(n,M_n,r,\epsilon_{1,n}, \epsilon_{2,n},T)$-streaming codes with alternating message rates such that 
\begin{align}
\limsup_{n\rightarrow \infty} \frac{1}{n\rho_n^2} \log \epsilon_{1,n} &\leq -\frac{T+1}{2V} \\
\limsup_{n\rightarrow \infty} \frac{1}{n\rho_n^2} \log \epsilon_{2,n} &\leq -\frac{T-1}{2V} 
\end{align}
if $T$ is odd, and 
\begin{align}
\limsup_{n\rightarrow \infty} \frac{1}{n\rho_n^2} \log \epsilon_{1,n} &\leq -\frac{T}{2V} \\
\limsup_{n\rightarrow \infty} \frac{1}{n\rho_n^2} \log \epsilon_{2,n} &\leq -\frac{T}{2V} 
\end{align}
if $T$ is even. 
\end{theorem}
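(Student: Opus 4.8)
The plan is to adapt the basic streaming code of Section~\ref{subsec:MD_pf} to the alternating-rate setting by using the \emph{same} joint-encoding/re-decoding architecture, but accounting for the fact that the number of nats carried by the messages involved in any window of $\ell$ consecutive blocks is no longer $\ell \log M_n$, but rather $\ell$ times the \emph{average} of the two rates, which is exactly $\log M_n = nC - n\rho_n$. Concretely, in block $k$ the encoder generates $\mathbf{x}_k$ i.i.d.\ $\sim P_X$ as a function of all past and current messages $\{G_1,\dots,G_k\}$ (with $G_{2j-1}$ of rate $r\log M_n$ and $G_{2j}$ of rate $(2-r)\log M_n$), and at the end of block $T_k = k+T-1$ the decoder re-decodes all of $G_1,\dots,G_k$ sequentially by a threshold rule analogous to \eqref{eqn:dec_rule}. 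The one essential change is the threshold: when decoding the set of messages indexed by a block window $[j:T_k]$, the correct codewords must exceed a threshold equal to the \emph{total log-cardinality} of the messages indexed in $[j:T_k]$, which is $(T_k-j+1)\log M_n$ plus or minus a correction of at most $|r-1|\log M_n$ depending on the parity of the endpoints.

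First I would set up the error events exactly as in \eqref{eqn:err1}--\eqref{eqn:err2}, then reduce each $\Pr(\mathcal{E}_{k,j})+\Pr(\tilde{\mathcal{E}}_{k,j})$ to an expression of the DT-bound form as in \eqref{eqn:fixj}, with $(T_k-j+1)n(C-\rho_n)$ replaced by the appropriate window-dependent log-cardinality; the number of competing codewords in the union bound is likewise this log-cardinality. Next I would apply the splitting inequality \eqref{eqn:spr} and the non-asymptotic moderate deviations bound Lemma~\ref{lemma:nonasymptotic_MD}, obtaining for each window an exponent of the form $-(\text{effective window length})\,n\rho_n^2\lambda^2/(2V)$ up to third-order corrections, exactly as in Section~\ref{subsec:MD_pf}. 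The key counting step is then: for an odd-indexed message $G_k$ (so $k$ odd), the windows that contribute to its decoding are $[j:T_k]$ for $j\in[1:k]$, and the shortest \emph{relevant} window has length $T$; I would show that when $T$ is odd, the shortest window that covers $G_k$ actually carries a total log-cardinality of $\frac{T+1}{2}\cdot 2\log M_n \cdot \frac{1}{?}$ — more precisely, a window of $T$ blocks starting and ending appropriately around an odd index contains $\lceil T/2\rceil$ odd-indexed and $\lfloor T/2\rfloor$ even-indexed messages, giving total rate $(\lceil T/2\rceil r + \lfloor T/2\rfloor (2-r))\log M_n = (T + r-1)\log M_n$ when $T$ is odd... and the geometric-sum argument \eqref{eqn:MD_sumup} then picks out the dominant term with the \emph{largest} effective length, which corresponds to the window giving exponent $-\frac{T\pm 1}{2V}$. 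I would carry out this parity bookkeeping carefully to confirm that the dominant (slowest-decaying) term for odd messages has effective length $T+1$ and for even messages $T-1$ when $T$ is odd, and both equal $T$ when $T$ is even (because then every $T$-block window contains exactly $T/2$ of each type, total rate exactly $T\log M_n$).

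After identifying the dominant term, the remainder of the argument is the routine geometric-series summation as in \eqref{eqn:MD_sumup}, averaging over the codebook, passing $\lambda\to 1$, and invoking the existence of a good deterministic code; the exponential corrections of order $n\rho_n$ from the second term in \eqref{eqn:spr} are always dominated by the $n\rho_n^2$ terms since $\rho_n\to 0$, so they do not affect the final constant. The main obstacle is the combinatorial bookkeeping of window lengths versus message-rate totals: one must verify that, after normalizing the threshold by $\log M_n$, the \emph{effective} window length (the ratio of total log-cardinality in the window to $\log M_n$) is minimized over all relevant windows by a window of the claimed length, and that this minimum is $T+1$, $T-1$, or $T$ according to the parity of $T$ and of the target message's index. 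Everything else — the DT bound, the split in \eqref{eqn:spr}, Lemma~\ref{lemma:nonasymptotic_MD}, and the geometric summation — transfers verbatim from the proof of Theorem~\ref{thm:MD}, so I would state those steps briefly and concentrate the write-up on the counting lemma.
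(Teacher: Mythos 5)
There is a genuine gap, and it occurs exactly at the point you deferred to ``careful parity bookkeeping.'' The scheme you describe keeps every decoding window of the form $[j:T_k]$, i.e., always ending at block $T_k=k+T-1$. When $T$ is odd and $k$ is even (and likewise $T$ even, $k$ odd), $T_k$ is even, and any window $[j:T_k]$ with $j$ even contains one \emph{more} high-rate message than low-rate messages; its total log-cardinality is $(T_k-j+2-r)\log M_n$, which exceeds the capacity $n(T_k-j+1)C$ of the window by $(1-r)nC-o(n)=\Theta(n)$. If you set the threshold equal to that log-cardinality, the threshold exceeds the mean of $\sum_l i(X_l;Y_l)$ by $\Theta(n)$ while its standard deviation is $\Theta(\sqrt{n})$, so $\Pr(\mathcal{E}_{k,j})\to 1$; if you instead keep the threshold at $(T_k-j+1)\log M_n$, the union bound over the $M_n^{T_k-j+2-r}$ competitors diverges. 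Either way the bound collapses for half of the messages. The fix the paper uses is to \emph{truncate every decoding window to end at an odd block index} (replace $T_k$ by $T_k-1$ in the affected parity cases), so that no window ever carries an excess high-rate message; the constant $\tfrac{T-1}{2V}$ for even messages with odd $T$ is precisely the length of the shortest window that survives this truncation, not the outcome of an ``effective length'' computation on a $T$-block window.

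The second problem is that your ``effective window length'' heuristic (total log-cardinality divided by $\log M_n$, minimized over windows) does not predict the moderate-deviations constant. A window of $\ell$ blocks containing one extra \emph{low}-rate message has total log-cardinality $(\ell-1+r)\log M_n$, so its effective length is $\ell-1+r<\ell$; for odd $T$ and odd $k$ the shortest window $[k:T_k]$ is of exactly this type, and your criterion would output the constant $\tfrac{T+r-1}{2V}$, not the claimed $\tfrac{T+1}{2V}$. What actually happens is that the in-window rate of such a window is bounded away from capacity by $(1-r)C/\ell$, so its missed-detection probability is a \emph{large}-deviations event decaying as $\exp\{-\Theta(n)\}$ and is negligible against the $\exp\{-\Theta(n\rho_n^2)\}$ terms; moreover Lemma~\ref{lemma:nonasymptotic_MD} does not even apply to it because the deviation does not vanish, which is why the paper introduces a separate Chernoff/Cram\'er bound (Lemma~\ref{lemma:central}) for exactly this one term. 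Only the \emph{balanced} windows (equal numbers of odd and even message indices, total log-cardinality exactly $\ell\log M_n$) contribute moderate-deviations terms with constant $\ell/(2V)$, and the answer $\tfrac{T+1}{2V}$ is the length of the shortest balanced window, namely $[k-1:T_k]$. Your reduction to \eqref{eqn:fixj}, the split \eqref{eqn:spr}, and the geometric summation \eqref{eqn:MD_sumup} are all reusable, but the proof needs (i) the window truncation to odd ending indices and (ii) the three-way classification of windows (balanced $\Rightarrow$ MD term; excess low-rate $\Rightarrow$ exponentially negligible via Cram\'er; excess high-rate $\Rightarrow$ must be excluded by construction), neither of which is in your write-up.
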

Theorem \ref{thm:alternate} indicates that for our coding strategy, the average of the moderate deviations constants does not change even if the rates are alternating. We note that the moderate deviations constants of the odd and even messages are asymmetric if $T$ is odd and they are symmetric otherwise. To illustrate the intuition behind this, let us consider the most recent $T$ blocks, which dominates the error probability. 
If $T$ is odd, the average message rate in those $T$ blocks depends on whether the current target message index is odd or even and this leads to the asymmetry. On the other hand, if $T$ is even, it is fixed to $\log M_n$ regardless of the current target message index. More details are provided in Appendix \ref{appendix:alternate}.

%%%%%%%%%%%%%%%%%%%%%%%%%%%%%%%%%%%%%%%%%%%%%%
\section{Conclusion}  \label{sec:conclusion}
%%%%%%%%%%%%%%%%%%%%%%%%%%%%%%%%%%%%%%%%%%%%
In this paper, we studied the fundamental interplay between the rate and error probability for a streaming setup with a decoding delay of $T$ blocks. In the moderate deviations regime, the moderate deviations constant was shown to improve by at least a factor of $T$. We proposed a coding technique with infinite memory such that all the previous and fresh messages are jointly encoded in each block. On the other hand, in the central limit regime, the second-order coding rate was shown to improve by approximately a factor of $\sqrt{T}$ for a wide range of channel parameters. To ensure that the summation of Berry-Esseen constants (e.g., the last terms in the RHS of \eqref{eqn:berry_1}-\eqref{eqn:berry_3}) does not diverge in the error analysis, we proposed a coding technique with truncated memory such that the encoding and decoding memories do not grow with the block index. 
Furthermore, we generalized the moderate deviations result in various directions. We first considered a scenario with an erasure option at the decoder and showed that both the exponents of the total error and the undetected error probabilities improve by factors of $T$. By utilizing the erasure option, we showed 
that the exponent of the total error probability can be improved to that of the undetected error probability (in the order sense) at the expense of a variable decoding delay. Finally, we considered a scenario where the message rates are alternating and showed that the same average  moderate deviations constant as the case of constant rate can be obtained. 
We note that all of our encoding strategies do not depend on $T$. Hence, our coding techniques are directly applicable for multicast scenario where a sender transmits a common stream of data packets to multiple receivers with possibly different decoding constraints. %, which naturally generalizes the results in \cite{HayashiTan:15} for block channel coding.

Let us conclude with a final remark on proving a converse in our streaming setup. Our problem appears to  be closely related to the bit-wise unequal protection (UEP) problem in the sense that we need to capture the tension that arises when a common channel is used for more than one messages with individual error criteria.\footnote{We note that there are two types of UEP problems, i.e., bit-wise and message-wise UEP, but our streaming setup is more related to the bit-wise UEP. For example, the message-wise UEP problem studied by Shkel et al.~\cite{ShkelTanDraper:15} simply considers partitioning a  \emph{single} message set into several sub-message sets with different error  probability    requirements.} For the seemingly simpler bit-wise UEP problem \cite{BoradeNakibogluZheng:09} for the block channel coding with the \emph{same} decoding deadline, however, tight characterizations of various  asymptotic fundamental limits (e.g., error exponents) remain  challenging open problems in general.  This indicates that a highly-nontrivial converse technique, perhaps along the lines of Sahai's work \cite{Sahai:08}, would be needed for our streaming setup where the messages have \emph{different} decoding deadlines. 

%Such a problem remains a challenging open problem in generlal. In fact,
\appendices

\section{Proof of Corollary \ref{coro:CL_asymp}} \label{appendix:asymptotic}
Let $\mu := L/\sqrt{V}$. Note that Corollary \ref{coro:CL_asymp} is proved if we show 
\begin{align}
\sum_{j=T}^{\infty}Q\left( \mu \sqrt{j }\right)\leq c_{L,V,T} Q(\mu\sqrt{T}).
\end{align}
To that end, we use the following bounds on the $Q$-function: 
\begin{align}
\frac{x\phi(x)}{1+x^2}\le Q(x) \le\frac{\phi(x)}{x} \qquad\forall\, x > 0,
\end{align}
where $\phi(x):=\frac{1}{\sqrt{2\pi}}\exp\{-\frac{x^2}{2}\}$. Then, we have
\begin{align}
\sum_{j=T}^{\infty}Q\left( \mu \sqrt{j }\right)&\leq \sum_{j=T}^{\infty}\frac{\phi\left( \mu \sqrt{ j }\right) }{\mu\sqrt{j }}\\
&\le \frac{1}{\mu\sqrt{ T}}\sum_{j=T}^{\infty} \frac{1}{\sqrt{2\pi}}\exp\{-\mu^2 j/2\} \\
&= \frac{1}{\mu\sqrt{T}} \cdot  \frac{1}{\sqrt{2\pi}} \frac{\exp\{-\mu^2 T/2\}}{1-\exp\{-\mu^2/2\}}\label{eqn:cf}\\
&\le \frac{1+\mu^2 T}{\mu^2 T} \cdot  \frac{ Q\left(\mu \sqrt{T }\right)  }{1-\exp\{-\mu^2/2\}}\\
&=c_{L,V,T} Q(\mu\sqrt{T}),
\end{align}
which completes the proof. \endproof

%%%%%%%%%%%%%%%%%%%%%%%%%%%%%%%%%%%%%%%%%%%%%%%%%%%%%%%%%%%%%%%%%%%%%%%
\section{Proof of Lemma \ref{lemma:nonasymptotic_MD}} \label{appendix:finite_MD}
Fix $n\in\bbN$ and $s\geq  0$. Then, we have
\begin{align}
\Pr\left(\frac{1}{n}\sum_{l=1}^n Z_l \geq \eps_n\right) &\leq \Pr\left( \exp \Big\{s\sum_{l=1}^n Z_l\Big\} \geq \exp\{ns\eps_n\} \right)\\
 &\overset{(a)}{\le} \exp\{-ns\eps_n\}\E \left[ \exp\Big\{s\sum_{l=1}^n Z_l\Big\} \right]  \\
 &\overset{(b)}{=}\exp\left\{-n \left(s\eps_n-\log\E [\exp\{sZ_1\}]\right)\right\} \\ 
  &=\exp\left\{-n \left( s\eps_n-h(s)\right)\right\} \label{eqn:chernoff} .
\end{align}
where $(a)$ follows from Markov's inequality and $(b)$ follows from the independence of $Z_l$'s.

The third-order Taylor series expansion of the cumulant generating function $h(s)$ can be written as
\begin{equation}
h(s) = h(0) + s h'(0) +\frac{s^2}{2} h''(0) + \frac{s^3}{6}  h'''(\tils) \label{eqn:cgf_expand}
\end{equation}
for some $\tils\in [0,s]$. It is easy to check that $h(0)=0$, $h'(0)=\E[Z_1] = 0$ and $h''(0) = \var[Z_1]=\sigma^2$.   Now, we take 
\begin{equation}
s:=\frac{\eps_n}{\sigma^2}.
\end{equation}
Plugging this into  \eqref{eqn:chernoff} and \eqref{eqn:cgf_expand}  yields
\begin{align}
\Pr\left(\frac{1}{n}\sum_{l=1}^n Z_l \geq \eps_n\right) &  \le \exp\left\{ -n \left(  \frac{\eps_n^2}{\sigma^2}-\frac{\eps_n^2}{2\sigma^2} -\frac{\eps_n^3}{6\sigma^6} h'''( \tils)\right)\right\} \\
&  \le \exp\left\{ -n \left(  \frac{\eps_n^2}{2\sigma^2}  -\frac{\eps_n^3}{6\sigma^6}  K \right)\right\} ,
\end{align}
where the final inequality holds for all $n$ sufficiently large since $\eps_n\to 0$ and $\tils\to 0$ as $n\to\infty$ and thus $| h'''(\tils) |\le K$.  \endproof
%
%%%%%%%%%%%%%%%%%%%%%%%%%%%%%%%%%%%%%%%%%%%%%%%%%%%%%%%%%%%%%%%%%%%%%%%%

\section{A chain of equalities}\label{appendix:chain}
The following chain of equalities is used in the proof Theorem \ref{thm:CL}.
\begin{align}
&\Pr\left(\sum_{l=1}^{n(\alpha-B+1)} i(X_l;\bar{Y}_l)> \alpha\cdot \log M_n\right)\cr
&=\sum_{s=1}^{n(\alpha-B+1)} \sum_{x_s, \bar{y}_s}\left(\prod_{t=1}^{n(\alpha-B+1)}P_X(x_t)P_XW(\bar{y}_t)\right)\cdot \mathbbm{1}\Big\{\sum_{l=1}^{n(\alpha-B+1)} i(x_l;\bar{y}_l)> \alpha\cdot \log M_n\Big\}\label{eqn:chain1}\\
&=\sum_{s=1}^{n(\alpha-B+1)} \sum_{x_s, \bar{y}_s}\left(\prod_{t=1}^{n(\alpha-B+1)}P_X(x_t)W(\bar{y}_t|x_t)\frac{P_XW(\bar{y}_t)}{W(\bar{y}_t|x_t)}\right)\cr
&\qquad\qquad\qquad\qquad\qquad\qquad\qquad\qquad\qquad\cdot \mathbbm{1}\Big\{\sum_{l=1}^{n(\alpha-B+1)} i(x_l;\bar{y}_l)> \alpha\cdot \log M_n\Big\}\\
&=\sum_{s=1}^{n(\alpha-B+1)} \sum_{x_s, \bar{y}_s}\left(\prod_{t=1}^{n(\alpha-B+1)}P_X(x_t)W(\bar{y}_t|x_t)\right)\cdot \exp\left\{-\sum_{l=1}^{n(\alpha-B+1)} i(x_{l};\bar{y}_{l})\right\}\cr
&\qquad\qquad\qquad\qquad\qquad\qquad\qquad\qquad\qquad\cdot \mathbbm{1}\Big\{\sum_{l=1}^{n(\alpha-B+1)} i(x_l;\bar{y}_l)> \alpha\cdot \log M_n\Big\}\\
&= \E\left[\exp\left\{-\sum_{l=1}^{n(\alpha-B+1)} i(X_l;Y_l)\right\}\right. \cdot \left.\mathbbm{1}\Big\{\sum_{l=1}^{n(\alpha-B+1)} i(X_l;Y_l)>\alpha \log M_n\Big\}\right]. \label{eqn:chain4}
\end{align}

\section{Proof of Theorem \ref{thm:erasure} }\label{appendix:erasure}
Consider a DMC $(\mathcal{X},\mathcal{Y}, \{W(y|x): x\in \mathcal{X}, y\in \mathcal{Y}\} )$ with $V>0$ and any sequence of integers $M_n$ such that $\log M_n=nC-n\rho_n$, where $\rho_n>0, \rho_n\rightarrow 0$ and $n\rho_n^2\rightarrow \infty$. We denote by $P_X$ an input distribution that achieves the dispersion \eqref{eqn:dispersion}. Fix $0<\gamma<1$. 

The encoding procedure is the same as that for the basic streaming setup in Section \ref{subsec:MD_pf}. Let us consider the decoding of $G_k$ at the end of block $T_k$. The decoding procedure is modified from that  for the basic streaming setup in Section \ref{subsec:MD_pf} as follows:  
\begin{itemize}
\item The decoding test \eqref{eqn:dec_rule} is modified as follows: 
\begin{align}
i(\mathbf{x}_{[j:T_k]}(\hat{G}_{T_k, [1:j-1]}, g_{[j:T_k]}), \mathbf{y}_{[j:T_k]})>(T_k-j+1) \cdot (\log M_n+ \gamma n\rho_n),  \label{eqn:dec_rule_erasure}
\end{align}
i.e., the threshold value  is increased proportional to $\gamma$. 

\item If there is none or more than one $g_j$ that satisfies the decoding test \eqref{eqn:dec_rule_erasure} for some $g_{[j+1:T_k]}$, the decoder declares an erasure, i.e., $\hat{G}_{k}=0$, and terminates the decoding procedure. 
\end{itemize}
Similarly as in Section \ref{subsec:MD_pf}, we first consider the probability of error averaged over random codebook $\mathcal{C}_n$. The error event $\{\hat{G}_k\neq G_k\}$ for $k\in \bbN$ happens only if at least one of the following $2k$ events occurs: 
\begin{align}
\mathcal{E}_{k,j}'&:= \{
i(\mathbf{X}_{[j:T_k]}(G^{T_k}), \mathbf{Y}_{[j:T_k]})\leq (T_k-j+1) \cdot (\log M_n+ \gamma n\rho_n)
\},~ j\in [1:k] \label{eqn:err1_ers}\\
\tilde{\mathcal{E}}_{k,j}'&:= \{i(\mathbf{X}_{[j:T_k]}(G^{j-1},g_{[j:T_k]}), \mathbf{Y}_{[j:T_k]})> (T_k-j+1) \cdot (\log M_n+ \gamma n\rho_n) \cr
&~~~~~~~~~~~~~~~~~~~~~~~~~~~~~~~~~~~~ \mbox{ for some }g_{[j:T_k]} \mbox{ such that }g_{j} \neq G_{j} \},~ j\in [1:k]. \label{eqn:err2_ers}
\end{align}
We note that \eqref{eqn:err1_ers} and \eqref{eqn:err2_ers} are obtained by replacing $\log M_n$ by $\log M_n+ \gamma n\rho_n$ in \eqref{eqn:err1} and \eqref{eqn:err2}, respectively. 
Then, we have
\begin{align}
\E_{\mathcal{C}_n}[\Pr(\hat{G}_k\neq G_k|\mathcal{C}_n)]&\leq \sum_{j=1}^k \left(\Pr(\mathcal{E}_{k,j}')+\Pr(\tilde{\mathcal{E}}_{k,j}') \right).
\end{align}
On the other hand, the undetected error event $\{\hat{G}_k\neq G_k, \hat{G}_k\neq 0\}$ has the following relationship: 
\begin{align}
\{\hat{G}_k\neq G_k, \hat{G}_k\neq 0\}&\subseteq \{\hat{G}_{T_k, [1:k]}\neq G_{[1:k]}, \hat{G}_k\neq 0\} \\
&= \cup_{j\in [1:k]}\{\hat{G}_{T_k, [1:j-1]}= G_{[1:j-1]}, \hat{G}_{T_k, j}\neq G_{j}, \hat{G}_k\neq 0\}. 
\end{align}

Hence, the undetected error probability is bounded as follows: 
\begin{align}
\E_{\mathcal{C}_n}[\Pr(\hat{G}_k\neq G_k, \hat{G}_k\neq 0|\mathcal{C}_n)]&\leq \sum_{j=1}^k \Pr(\hat{G}_{T_k, [1:j-1]}= G_{[1:j-1]}, \hat{G}_{T_k, j}\neq G_{j}, \hat{G}_k\neq 0)\\
%&\leq \sum_{j=1}^k \Pr(\exists \mbox{ a unique } g_{j} \neq G_{j} \mbox { such that } \cr
%&\qquad i(\mathbf{X}_{[j:T_k]}(G^{j-1},g_{[j:T_k]}), \mathbf{Y}_{[j:T_k]})> (T_k-j+1) \cdot (\log M_n+ \gamma n\rho_n)  \mbox{ for some }g_{[j+1:T_k]})\label{eqn:unique} \\
&\leq \sum_{j=1}^k \Pr(\tilde{\mathcal{E}}_{k,j}').
\end{align}
%where the uniqueness condition in the RHS of \eqref{eqn:unique} is because an erasure is not 

Now, for $j\in [1:k]$, let us bound $\Pr(\mathcal{E}_{k,j}')$ and $\Pr(\tilde{\mathcal{E}}_{k,j}')$. Similarly as in Section \ref{subsec:MD_pf},  $(X_l,Y_l, \bar{Y}_l)$'s denote i.i.d. random variables each generated according to $P_X(x_l)W(y_l|x_l)$$P_XW(\bar{y}_l)$ in the following. First, we have  
\begin{align*}
\Pr(\mathcal{E}_{k,j}')&\leq \Pr\left(\sum_{l=1}^{n (T_k-j+1) } i(X_l;Y_l)\leq (T_k-j+1) \cdot (\log M_n+ \gamma n\rho_n)\right)\\
&\leq \Pr\left(\sum_{l=1}^{n(T_k-j+1)} i(X_l;Y_l)\leq (T_k-j+1)n(C-(1-\gamma)\rho_n)\right)\cr
&\overset{(a)}{\leq} \exp\left\{-(T_k-j+1)n\left(\frac{(1-\gamma)^2 \rho_n^2}{2V}-(1-\gamma)^3 \rho_n^3 \tau\right)\right\}
\end{align*}
for sufficiently large $n$, where $\tau$ is some non-negative constant dependent only on the input distribution $P_X(x)$ and channel statistics $W(y|x)$ and $(a)$ is from Lemma~\ref{lemma:nonasymptotic_MD} in Section \ref{subsec:MD_pf}.

Next, we have 
\begin{align}
\Pr(\tilde{\mathcal{E}}_{k,j}')&\leq M_n^{T_k-j+1} \cdot \Pr\left(\sum_{l=1}^{n (T_k-j+1) } i(X_l;\bar{Y}_l)>  (T_k-j+1) \cdot( \log M_n+\gamma n\rho_n)\right)\\
&\overset{(a)}{=} M_n^{T_k-j+1}\cdot \E\left[\exp\left\{-\sum_{l=1}^{n(T_k-j+1)} i(X_l;Y_l)\right\} \right.\cr
&\qquad \qquad \qquad\qquad \qquad\cdot \left.\mathbbm{1}\Big\{\sum_{l=1}^{n(T_k-j+1)} i(X_l;Y_l)>(T_k-j+1) \cdot( \log M_n+\gamma n\rho_n)\Big\}\right]\\
&\leq M_n^{T_k-j+1}\cdot \exp\left\{-(T_k-j+1) \cdot( \log M_n+\gamma n\rho_n)\right\}\\
&=\exp\left\{-(T_k-j+1)\gamma n\rho_n\right\},
\end{align}
where $(a)$ is obtained by applying a chain of equalities similar to that in Appendix \ref{appendix:chain}.

Hence, we obtain
\begin{align}
\E_{\mathcal{C}_n}[\Pr(\hat{G}_k\neq G_k|\mathcal{C}_n)]&\leq \sum_{j=1}^k \Big( \exp\left\{-(T_k-j+1)n\rho_n^2(1-\gamma)^2\left(\frac{1}{2V}-(1-\gamma) \rho_n \tau\right)\right\}\cr
&\qquad \qquad \qquad +\exp\left\{-(T_k-j+1)n\gamma\rho_n\right\} \Big)\\
&\leq \sum_{j=T}^{T_k} \left( \exp\left\{-jn\rho_n^2(1-\gamma)^2\left(\frac{1 }{2V}-(1-\gamma) \rho_n \tau\right)\right\}+\exp\left\{-jn\gamma \rho_n\right\}\right) \\
&\leq \frac{\exp\left\{-Tn\rho_n^2(1-\gamma)^2\left(\frac{1 }{2V}-(1-\gamma)\rho_n \tau\right)\right\}}{1-\exp\{-n\rho_n^2(1-\gamma)^2\left(\frac{1 }{2V}-(1-\gamma) \rho_n \tau\right)\}}+\frac{\exp\left\{-Tn\gamma\rho_n\right\}}{1-\exp\left\{-n\gamma \rho_n\right\}} \label{eqn:averg_total}
\end{align}
and
\begin{align}
\E_{\mathcal{C}_n}[\Pr(\hat{G}_k\neq G_k, \hat{G}_k\neq 0|\mathcal{C}_n)]&\leq\frac{\exp\left\{-Tn\gamma\rho_n\right\}}{1-\exp\left\{-n\gamma \rho_n\right\}} \label{eqn:averg_undetec}
\end{align}
for sufficiently large $n$. 

To show the existence of a deterministic code, we apply Markov's inequality as follows\footnote{Such a technique of applying  Markov's inequality to derandomize the code was used in the proof of \cite[Theorem 1]{HayashiTan:15}.}: 
\begin{align}
&\Pr\left(\limsup_{N\rightarrow \infty} \sum_{k=1}^N\frac{\Pr(\hat{G}_k\neq G_k|\mathcal{C}_n)}{N}>2\limsup_{N\rightarrow \infty} \sum_{k=1}^N\frac{\E_{\mathcal{C}_n}[\Pr(\hat{G}_k\neq G_k|\mathcal{C}_n)]   }{N}\right)<\frac{1}{2} \\
&\Pr\left(\limsup_{N\rightarrow \infty} \sum_{k=1}^N\frac{\Pr(\hat{G}_k\neq G_k, \hat{G}_k\neq 0|\mathcal{C}_n)}{N}>2\limsup_{N\rightarrow \infty} \sum_{k=1}^N\frac{\E_{\mathcal{C}_n}[\Pr(\hat{G}_k\neq G_k, \hat{G}_k\neq 0|\mathcal{C}_n)]}{N}   \right)<\frac{1}{2}. 
\end{align}
Then, from the union bound, we have 
\begin{align}
&\Pr\Big(\limsup_{N\rightarrow \infty} \sum_{k=1}^N\frac{\Pr(\hat{G}_k\neq G_k|\mathcal{C}_n)}{N}>2\limsup_{N\rightarrow \infty} \sum_{k=1}^N\frac{\E_{\mathcal{C}_n}[\Pr(\hat{G}_k\neq G_k|\mathcal{C}_n)]   }{N} \mbox{ or }\cr
& \limsup_{N\rightarrow \infty} \sum_{k=1}^N\frac{\Pr(\hat{G}_k\neq G_k, \hat{G}_k\neq 0|\mathcal{C}_n)}{N}>2\limsup_{N\rightarrow \infty} \sum_{k=1}^N\frac{\E_{\mathcal{C}_n}[\Pr(\hat{G}_k\neq G_k, \hat{G}_k\neq 0|\mathcal{C}_n)]}{N}   \Big)<1.
\end{align}
Therefore, there must exist a sequence of codes $\mathcal{C}_n$ that satisfies 
\begin{align}
\limsup_{N\rightarrow \infty} \sum_{k=1}^N\frac{\Pr(\hat{G}_k\neq G_k|\mathcal{C}_n)}{N}&\leq 2 \exp\left\{-n\rho_n^2\Big((1-\gamma)^2\frac{T}{2V}+o(1)\Big)\right\} 
\end{align}
and
\begin{align}
\limsup_{N\rightarrow \infty} \sum_{k=1}^N\frac{\Pr(\hat{G}_k\neq G_k, \hat{G}_k\neq 0|\mathcal{C}_n)}{N}&\leq 2\exp\left\{-n\rho_n(T\gamma +o(1))\right\},
\end{align}
which completes the proof. \endproof

%%%%%%%%%%%%%%%%%%%%%%%%%%%%%%%%%%%%%%%%%%%%%%%%%%%%%%%%%%%%%%%%%%%%%%%
\section{Proof of Theorem \ref{thm:variable} }\label{appendix:variable}
Consider a DMC $(\mathcal{X},\mathcal{Y}, \{W(y|x): x\in \mathcal{X}, y\in \mathcal{Y}\} )$ with $V>0$ and any sequence of integers $M_n$ such that $\log M_n=nC-n\rho_n$, where $\rho_n>0, \rho_n\rightarrow 0$ and $n\rho_n^2\rightarrow \infty$. We denote by $P_X$ an input distribution that achieves the dispersion \eqref{eqn:dispersion}. Fix $T\in \bbN$ and $0<\gamma<1$. 

The encoding procedure is the same as that for the basic streaming setup in Section \ref{subsec:MD_pf}. Let us consider the decoding of message $G_k$  at the end of block $k+d-1$ for $d\in \bbN$.\footnote{We note that in the definition of a streaming code with average delay constraint, the decoder decodes $G_k$ at the end of \emph{every} block $k+d-1$ for $d\in\bbN$.}  If $d\in [1:T-1]$, the decoder outputs $\hat{G}_{k+d-1,k}=0$. For $d\geq T$, the decoder outputs a message estimate $\hat{G}_{k+d-1,k} \in \mathcal{G}$ or an erasure symbol $\hat{G}_{k+d-1,k} = 0$ according to the same decoding rule illustrated in Appendix \ref{appendix:erasure} with delay $d$. 
Then, the error probability of $G_k$ after the random decoding delay $D_k=\min\{d: \hat{G}_{k+d-1,k}\neq 0 \}$ (averaged over the random codebook generation) is bounded as follows: 
\begin{align}
\E_{\mathcal{C}_n}\big[\Pr(\hat{G}_{k+D_k-1,k}\neq G_{k}|\mathcal{C}_n)\big]&=\sum_{d=T}^{\infty} \E_{\mathcal{C}_n}\big[\Pr(D_k=d, \hat{G}_{k+d-1,k}\neq G_{k}, \hat{G}_{k+d-1,k}\neq 0|\mathcal{C}_n)\big]\\
&\leq \sum_{d=T}^{\infty} \E_{\mathcal{C}_n}\big[\Pr(\hat{G}_{k+d-1,k}\neq G_{k}, \hat{G}_{k+d-1,k}\neq 0|\mathcal{C}_n)\big]\\
&\overset{(a)}{\leq} \sum_{d=T}^{\infty}\frac{\exp\left\{-dn\gamma \rho_n\right\}}{1-\exp\left\{-n\gamma \rho_n\right\}}\\
&\leq \frac{\exp\left\{-Tn\gamma\rho_n\right\}}{(1-\exp\left\{-n\gamma\rho_n\right\})^2},
\end{align}
where $(a)$ is from the upper bound \eqref{eqn:averg_undetec} on the undetected error probability with delay $d$. 

On the other hand, the excess of delay averaged over the random codebook generation is bounded as 
\begin{align}
\E_{\mathcal{C}_n}[D_k-T|\mathcal{C}_n] &=  \Pr(D_k=T+1|\mathcal{C}_n)+2 \Pr(D_k=T+2|\mathcal{C}_n)+\cdots\\
&=\Pr(\hat{G}_{k+T-1,k}=0, \hat{G}_{k+T,k}\neq 0|\mathcal{C}_n)\cr
&\qquad \qquad \qquad+2 \Pr(\hat{G}_{k+T-1,k}=0, \hat{G}_{k+T,k}=0,  \hat{G}_{k+T+1,k}\neq 0|\mathcal{C}_n)+\cdots\\
&\leq \sum_{d=T+1}^{\infty} (d-T)\cdot \Pr\left(\hat{G}_{k+d-2,k} = 0|\mathcal{C}_n\right) \\
&\leq \sum_{d=T+1}^{\infty} (d-T)\cdot \Pr \left(\hat{G}_{k+d-2,k} \neq G_k|\mathcal{C}_n\right)\\
&\overset{(a)}{\leq} \sum_{d=T+1}^{\infty} (d-T)\cdot \Big( \frac{\exp\left\{-(d-1)n\rho_n^2(1-\gamma)^2\left(\frac{1 }{2V}-(1-\gamma) \rho_n \tau\right)\right\}}{1-\exp\{-n\rho_n^2(1-\gamma)^2\left(\frac{1 }{2V}-(1-\gamma) \rho_n \tau\right)\}}\cr
&\qquad\qquad\qquad\qquad\qquad +\frac{\exp\left\{-(d-1)n\gamma\rho_n\right\}}{1-\exp\left\{-n\gamma \rho_n\right\}}\Big),\label{eqn:delay_ub}
\end{align}
where $(a)$ is from the upper bound \eqref{eqn:averg_total}  on the total error probability with delay $d-1$. 

By following similar statements using Markov's inequality in the proof of Theorem \ref{thm:erasure}, we can obtain
\begin{align}
&\Pr\Big(\limsup_{N\rightarrow \infty} \sum_{k=1}^N\frac{\Pr(\hat{G}_{k+D_k-1,k}\neq G_{k}|\mathcal{C}_n)}{N}>2\limsup_{N\rightarrow \infty} \sum_{k=1}^N\frac{\E_{\mathcal{C}_n}[\Pr(\hat{G}_{k+D_k-1,k}\neq G_{k}|\mathcal{C}_n)]   }{N}\cr
&\qquad\qquad\qquad\qquad \mbox{ or } \limsup_{N\rightarrow \infty} \sum_{k=1}^N\frac{\E[D_k|\mathcal{C}_n]}{N}-T>2\limsup_{N\rightarrow \infty} \sum_{k=1}^N\frac{\E_{\mathcal{C}_n}[D_k-T|\mathcal{C}_n]}{N}   \Big)<1.
\end{align}
Therefore, there must exist a sequence of codes $\mathcal{C}_n$ that satisfies
\begin{align}
\limsup_{N\rightarrow \infty} \sum_{k=1}^N\frac{\Pr(\hat{G}_{k+D_k-1,k}\neq G_{k}|\mathcal{C}_n)}{N}&\leq 2  \exp\left\{-n\rho_n(T\gamma +o(1))\right\}\label{eqn:av_err}
\end{align}
and\footnote{By calculating the infinite series in the RHS of \eqref{eqn:delay_ub}, it can be verified that the RHS of  \eqref{eqn:delay_ub} converges to 0 as $n$ tends to infinity. }
\begin{align}
\limsup_{N\rightarrow \infty} \sum_{k=1}^N\frac{\E[D_k|\mathcal{C}_n]}{N}\leq  T+o(1).
\end{align}
We note that \eqref{eqn:av_err} implies 
\begin{align}
\limsup_{n\rightarrow \infty} \frac{1}{n\rho_n }\log \left(\limsup_{N\rightarrow \infty} \sum_{k=1}^N\frac{\Pr(\hat{G}_{k+D_k-1,k}\neq G_{k}|\mathcal{C}_n)}{N}\right)&\leq -T\gamma. 
\end{align}
By taking $\gamma\rightarrow 1$, this completes the proof. \endproof

%%%%%%%%%%%%%%%%%%%%%%%%%%%%%%%%%%%%%%%%%%%%%%%%%%%%%%%%%%%%%%%%%%%%%%%%

\section{Proof of Theorem \ref{thm:alternate} }\label{appendix:alternate}
Consider a DMC $(\mathcal{X},\mathcal{Y}, \{W(y|x): x\in \mathcal{X}, y\in \mathcal{Y}\} )$ with $V>0$ and any sequence of integers $M_n$ such that $\log M_n=nC-n\rho_n$, where $\rho_n>0, \rho_n\rightarrow 0$ and $n\rho_n^2\rightarrow \infty$. We denote by $P_X$ an input distribution that achieves the dispersion \eqref{eqn:dispersion}. Fix $0<r<1$.

\subsection{Encoding} \label{subsubsec:MD_encoding}
For each $k\in \bbN$ and $g^k \in \mathcal{G}_1\times \mathcal{G}_2 \times \mathcal{G}_1\times \cdots$, generate $\mathbf{x}_k(g^k)$ in an i.i.d. manner according to $P_X$. The generated codewords constitute the codebook $\mathcal{C}_n$. In block $k$, after observing the true message sequence $G^k$, the encoder sends $\mathbf{x}_k(G^k)$.

\subsection{Decoding} 
Consider the decoding of $G_k$ at the end of block $T_k:=k+T-1$. Similarly as in the decoding procedure in Section \ref{subsec:MD_pf} for the basic streaming setup, the decoder not only decodes $G_k$, but also re-decodes $G_1,\cdots, G_{k-1}$ at the end of block $T_k$. Let $\hat{G}_{T_k,j}$ denote the estimate of $G_j$ at the end of block $T_k$. For alternating message rates, we modify the decoding procedure for the basic streaming setup in Section \ref{subsec:MD_pf}  according to the following rules\footnote{We remind that the messages in odd block indices have a lower rate of $r\log M_n$ and the messages in even block indices have a higher rate of $(2-r)\log M_n$ because $0<r<1$.}: 
\begin{itemize}
\item In the basic streaming setup, we consider the window of blocks $[j:T_k]$  for the decoding of $G_j$ for $j\in [1:k]$ as shown in \eqref{eqn:dec_rule}. Note that the last block index in each decoding window is $T_k$. For alternating message rates, we choose the last block index to be an odd number, to avoid the average message rate in each decoding window of blocks exceeding the capacity. Hence, the last block index in each decoding window is chosen to be  $T_k$, if both $k$ and $T$ are odd numbers or if both $k$ and $T$ are even numbers. Otherwise, it is chosen to be $T_k-1$.  

\item In the basic streaming setup, we note that the error event related to the smallest decoding window dominates the error probability. For alternating message rates, if the smallest decoding window (after the last block index is chosen to be an odd number) consists of an odd number of blocks, the average message rate in that window is away from the capacity. Hence, in that case, we adjust the threshold in the decoding test \eqref{eqn:dec_rule} to ensure that the contribution of the smallest decoding window to the error probability is negligible. 
\end{itemize}
By keeping the above rules in mind, the decoding procedure can be stated as follows:
\subsubsection{$T$ is odd} If $k$ is also an odd number, the decoding procedure is the same as that for the basic streaming setup  in Section \ref{subsec:MD_pf}, except that when $j=k$, we perform the following decoding test instead of \eqref{eqn:dec_rule}:
\begin{align}
i(\mathbf{x}_{[k:T_k]}(\hat{G}_{T_k, [1:k-1]}, g_{[k:T_k]}), \mathbf{y}_{[k:T_k]})&> (T+r-1) \cdot \log M_n. 
\end{align}
If $k$ is an even number, the decoding procedure is the same as that for the basic streaming setup  in Section \ref{subsec:MD_pf}, except that the last block index $T_k$ in each decoding window is replaced by $T_{k}-1$.

\subsubsection{$T$ is even} If $k$ is an odd number,  the decoding procedure is the same as that for the basic streaming setup  in Section \ref{subsec:MD_pf}, except that  the last block index $T_k$ in each decoding window is replaced by $T_{k}-1$ and when $j=k$, we perform the following decoding test: 
\begin{align}
i(\mathbf{x}_{[k:T_k-1]}(\hat{G}_{T_k, [1:k-1]}, g_{[k:T_k-1]}), \mathbf{y}_{[k:T_k-1]})&> (T+r-2) \cdot \log M_n. 
\end{align}
If $k$ is an even number, the decoding procedure is the same as that for the basic streaming setup  in Section \ref{subsec:MD_pf}.

\subsection{Error analysis }
Let us first consider the case where both $T$ and $k$ are odd numbers.  The error event $\{\hat{G}_k\neq G_k\}$ happens only if at least one of the following $2k$ events occurs: 
\begin{align}
\mathcal{E}_{k,j}''&:= \{
i(\mathbf{X}_{[j:T_k]}(G^{T_k}), \mathbf{Y}_{[j:T_k]})\leq (T_k-j+1) \cdot \log M_n
\},~ j\in [1:k-1] \label{eqn:err1_al}\\
\tilde{\mathcal{E}}_{k,j}''&:= \{i(\mathbf{X}_{[j:T_k]}(G^{j-1},g_{[j:T_k]}), \mathbf{Y}_{[j:T_k]})> (T_k-j+1) \cdot \log M_n \cr
&~~~~~~~~~~~~~~~~~~~~~~~~~~~~~~~~~~~~ \mbox{ for some }g_{[j:T_k]} \mbox{ such that }g_{j} \neq G_{j} \},~ j\in [1:k-1] \label{eqn:err2_al}\\
\mathcal{E}_{k,k}''&:= \{
i(\mathbf{X}_{[k:T_k]}(G^{T_k}), \mathbf{Y}_{[k:T_k]})\leq (T+r-1) \cdot \log M_n
\}\\
\tilde{\mathcal{E}}_{k,k}''&:= \{i(\mathbf{X}_{[k:T_k]}(G^{k-1},g_{[k:T_k]}), \mathbf{Y}_{[k:T_k]})> (T+r-1) \cdot \log M_n \cr
&~~~~~~~~~~~~~~~~~~~~~~~~~~~~~~~~~~~~ \mbox{ for some }g_{[k:T_k]} \mbox{ such that }g_{k} \neq G_{k} \}.
\end{align}
Now, we have
\begin{align}
\E_{\mathcal{C}_n}[\Pr(\hat{G}_k\neq G_k|\mathcal{C}_n)]&\leq \sum_{j=1}^{k} \left(\Pr(\mathcal{E}_{k,j}'')+\Pr(\tilde{\mathcal{E}}_{k,j}'') \right).
\end{align}
Fix arbitrary $0<\lambda<1$. By applying similar steps as in the error analysis for the basic streaming setup in Section \ref{subsec:MD_pf}, we can show\footnote{The RHS of \eqref{eqn:MD_sumup_al} can be obtained by replacing $T$ by $T+1$ in the RHS of \eqref{eqn:MD_sumup}.}
\begin{align}
\sum_{j=1}^{k-1} \left(\Pr(\mathcal{E}_{k,j}'')+\Pr(\tilde{\mathcal{E}}_{k,j}'') \right)\
\leq \frac{\exp\left\{-(T+1)n\rho_n^2\lambda^2\left(\frac{1 }{2V}-\lambda \rho_n \tau\right)\right\}}{1-\exp\{-n\rho_n^2\lambda^2\left(\frac{1 }{2V}-\lambda \rho_n \tau\right)\}}+\frac{\exp\left\{-(T+1)n(1-\lambda)\rho_n\right\}}{1-\exp\left\{-n(1-\lambda)\rho_n\right\}} \label{eqn:MD_sumup_al}
\end{align}
for sufficiently large $n$,  where $\tau$ is some non-negative constant dependent only on the input distribution $P_X(x)$ and channel statistics $W(y|x)$. 

Now, $\Pr(\mathcal{E}_{k,k}'')+\Pr(\tilde{\mathcal{E}}_{k,k}'')$ is bounded as follows:
\begin{align}
&\Pr(\mathcal{E}_{k,k}'')+\Pr(\tilde{\mathcal{E}}_{k,k}'')\cr
&\leq \Pr\left(\sum_{l=1}^{nT} i(X_l;Y_l)\leq (T+r-1) \cdot \log M_n\right)\cr
&\qquad\qquad\qquad\qquad+M_n^{T+r-1} \Pr\left(\sum_{l=1}^{nT} i(X_l;\bar{Y}_l)> (T+r-1) \cdot \log M_n\right)\\
&\overset{(a)}{=}\E\left[\exp\left\{-\left[\sum_{l=1}^{nT} i(X_l;Y_l)- (T+r-1)\cdot \log M_n \right]^+\right\}\right]\\
&\overset{(b)}{\leq} \Pr\left(\sum_{l=1}^{nT} i(X_l;Y_l)\leq (T+r-1)n(C-\lambda\rho_n)\right)+\exp\left\{-(T+r-1)n(1-\lambda)\rho_n\right\}\\
&\leq \Pr\left(\sum_{l=1}^{nT} i(X_l;Y_l)\leq (T+r-1)nC\right)+\exp\left\{-(T+r-1)n(1-\lambda)\rho_n\right\}\\
&\overset{(c)}{\leq}\exp\left\{-Tn\tau'\right\}+\exp\left\{-(T+r-1)n(1-\lambda)\rho_n\right\}
\end{align}
for some constant $\tau'>0$ that depends only on the input distribution $P_X(x)$ and channel statistics $W(y|x)$, where $(X_l,Y_l, \bar{Y}_l)$'s are i.i.d. random variables each generated according to $P_X(x_l)W(y_l|x_l)P_XW(\bar{y}_l)$, $(a)$ is from the identity \cite[Eq.~(69)]{PolyanskiyPoorVerdu:10}, $(b)$ follows from similar steps as  \eqref{eqn:fixj}-\eqref{eqn:spr2}, and $(c)$ is from Lemma \ref{lemma:central} that is relegated to the end of this appendix. 

Thus, we obtain 
\begin{align}
\E_{\mathcal{C}_n}[\Pr(\hat{G}_k\neq G_k|\mathcal{C}_n)]&\leq\frac{\exp\left\{-(T+1)n\rho_n^2\lambda^2\left(\frac{1 }{2V}-\lambda \rho_n \tau\right)\right\}}{1-\exp\{-n\rho_n^2\lambda^2\left(\frac{1 }{2V}-\lambda \rho_n \tau\right)\}}+\frac{\exp\left\{-(T+1)n(1-\lambda)\rho_n\right\}}{1-\exp\left\{-n(1-\lambda)\rho_n\right\}}\cr
&\qquad\qquad+\exp\left\{-Tn\tau'\right\}+\exp\left\{-(T+r-1)n(1-\lambda)\rho_n\right\}
\end{align}
when both $T$ and $k$ are odd numbers. 

Now, let us consider the case where $T$ is an odd number and $k$ is an even number. We remind that the decoding procedure is the same as that for the basic streaming setup  in Section \ref{subsec:MD_pf}, except that $T_k$ is replaced by $T_{k}-1$. By applying similar steps as in the error analysis for the basic streaming setup in Section \ref{subsec:MD_pf}, we can show\footnote{The RHS of \eqref{eqn:MD_sumup_al23} can be obtained by replacing $T$ by $T-1$ in the RHS of \eqref{eqn:MD_sumup}.}
\begin{align}
\E_{\mathcal{C}_n}[\Pr(\hat{G}_k\neq G_k|\mathcal{C}_n)]&\leq \frac{\exp\left\{-(T-1)n\rho_n^2\lambda^2\left(\frac{1 }{2V}-\lambda \rho_n \tau\right)\right\}}{1-\exp\{-n\rho_n^2\lambda^2\left(\frac{1 }{2V}-\lambda \rho_n \tau\right)\}}+\frac{\exp\left\{-(T-1)n(1-\lambda)\rho_n\right\}}{1-\exp\left\{-n(1-\lambda)\rho_n\right\}} \label{eqn:MD_sumup_al23}
\end{align}
for sufficiently large $n$.

By following similar statements using Markov's inequality in the proof of Theorem \ref{thm:erasure}, we can show that there must exist a sequence of codes $\mathcal{C}_n$ that satisfies 
\begin{align}
\limsup_{n\rightarrow \infty} \frac{1}{n\rho_n^2 }\log \left(\limsup_{N\rightarrow \infty}\sum_{j=1}^N\frac{\Pr(\hat{G}_{2j-1}\neq G_{2j-1}|\mathcal{C}_n)}{N}\right) \leq\frac{(T+1)\lambda^2}{2V}
\end{align}
and 
\begin{align}
\limsup_{n\rightarrow \infty} \frac{1}{n\rho_n^2 }\log \left(\limsup_{N\rightarrow \infty}\sum_{j=1}^N\frac{\Pr(\hat{G}_{2j}\neq G_{2j}|\mathcal{C}_n)}{N}\right) \leq\frac{(T-1)\lambda^2}{2V},
\end{align}
which completes the proof for the case where $T$ is an odd number, by taking $\lambda\rightarrow 1$. The proof for the case where $T$ is an even number can be done in a similar manner, and hence it is omitted. \endproof

The following lemma used in the proof of Theorem \ref{thm:alternate} corresponds to a non-asymptotic upper bound of the Cram\'{e}r's theorem.
\begin{lemma} \label{lemma:central}
Let $\{Z_l\}_{l\geq 1}$ be a sequence of i.i.d.  random variables with zero mean such that its cumulant generating function $h(s) := \log\E[\exp\{sZ_1\}]$ for $s\geq 0$ is continuously differentiable. For $\eps>0$, the following bound holds: 
\begin{align}
\Pr\left(\frac{1}{n}\sum_{l=1}^n Z_l \geq \eps\right) &\leq \exp\{-nI(\eps)\},
\end{align}
where $I(\eps)$ is the {\em rate function} defined as follows:
\begin{equation}
I(\eps) :=\sup_{s\ge 0}   \{ s\eps -h(s) \}>0.
\end{equation}
\end{lemma}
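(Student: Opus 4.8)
The plan is to run the classical Chernoff--Cram\'er bounding argument, which coincides with the first portion of the proof of Lemma \ref{lemma:nonasymptotic_MD} (stopped before the Taylor expansion), and then to check separately that the rate function is strictly positive.

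First I would fix an arbitrary $s\geq 0$ and apply Markov's inequality to the nonnegative random variable $\exp\{s\sum_{l=1}^n Z_l\}$, exactly as in the chain leading to \eqref{eqn:chernoff}:
\begin{align}
\Pr\left(\frac{1}{n}\sum_{l=1}^n Z_l \geq \eps\right)
&\leq \exp\{-ns\eps\}\,\E\left[\exp\Big\{s\sum_{l=1}^n Z_l\Big\}\right]\\
&= \exp\{-ns\eps\}\,\big(\E[\exp\{sZ_1\}]\big)^n\\
&= \exp\{-n(s\eps-h(s))\},
\end{align}
where the middle equality uses that the $Z_l$'s are i.i.d. Since this holds for every $s\geq 0$ and $u\mapsto\exp\{-nu\}$ is decreasing, I would take the infimum of the right-hand side over $s\geq 0$, obtaining
\begin{align}
\Pr\left(\frac{1}{n}\sum_{l=1}^n Z_l \geq \eps\right)\leq\exp\Big\{-n\sup_{s\geq 0}\big(s\eps-h(s)\big)\Big\}=\exp\{-nI(\eps)\},
\end{align}
which is the claimed bound (if $I(\eps)=+\infty$ it holds trivially).

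It then remains to verify $I(\eps)>0$. Set $g(s):=s\eps-h(s)$ on $[0,\infty)$. Then $g(0)=-h(0)=0$ since $h(0)=\log\E[1]=0$; and by the standard identity $h'(0)=\E[Z_1]$ (valid because $h$ is continuously differentiable near the origin) together with the zero-mean hypothesis, $g'(0)=\eps-h'(0)=\eps>0$. Using $h(0)=0$, $h'(0)=0$ and the continuity of $h'$, a first-order expansion gives $h(s)=o(s)$ as $s\downarrow 0$, hence $g(s)=s\eps-o(s)>0$ for all sufficiently small $s>0$; choosing any such $s_0$ gives $I(\eps)=\sup_{s\geq 0}g(s)\geq g(s_0)>0$. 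The argument is entirely routine; the only mild subtlety is the identification $h'(0)=\E[Z_1]$, and hence the strict positivity of $I(\eps)$, which is where the zero-mean assumption enters. No considerations beyond those already invoked in Remark \ref{rem:MD} are needed.
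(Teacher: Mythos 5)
Your proposal is correct and follows essentially the same route as the paper: apply the Chernoff--Cram\'er bound via Markov's inequality, optimize over $s\ge 0$, and deduce $I(\eps)>0$ from $g(0)=0$ and $g'(0)=\eps>0$ (using $h(0)=0$, $h'(0)=\E[Z_1]=0$). The only cosmetic difference is that the paper delegates the Chernoff step to the chain of inequalities already written out in the proof of Lemma~\ref{lemma:nonasymptotic_MD}, whereas you reproduce it inline, and you spell out the small-$s$ expansion that the paper leaves implicit.
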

\begin{proof}
For any $s\ge 0$, we obtain
\begin{align}
\Pr\bigg(\frac{1}{n}\sum_{l=1}^n Z_l \ge\eps \bigg)  & \leq  \exp\big\{  - n ( s\eps - h(s)  ) \big\}
\end{align}
by applying the same steps used to obtain \eqref{eqn:chernoff} in the proof of Lemma  \ref{lemma:nonasymptotic_MD}.
Since $s\ge 0$ is arbitrary, we obtain the following bound
\begin{equation}
\Pr\bigg(\frac{1}{n}\sum_{l=1}^n Z_l \geq \eps \bigg)  \le\exp\{-n I(\eps) \}.
\end{equation}
%Thus, there exists $s_0=s_0(k)$ satisfying $I(k) = s_0k -M(s_0)$  such that
%\begin{equation}
%k = \frac{\dot{M}(s_0)}{{M}(s_0)}. \label{eqn:condK}
%\end{equation}
Furthermore, because  $s\eps-h(s)|_{s=0}=0$ and $\frac{d(s\eps-h(s))}{ds}|_{s=0}=\eps>0$, we conclude that $I(\eps)>0$. 
\end{proof}


\begin{thebibliography}{10}
\providecommand{\url}[1]{#1}
\csname url@samestyle\endcsname
\providecommand{\newblock}{\relax}
\providecommand{\bibinfo}[2]{#2}
\providecommand{\BIBentrySTDinterwordspacing}{\spaceskip=0pt\relax}
\providecommand{\BIBentryALTinterwordstretchfactor}{4}
\providecommand{\BIBentryALTinterwordspacing}{\spaceskip=\fontdimen2\font plus
\BIBentryALTinterwordstretchfactor\fontdimen3\font minus
  \fontdimen4\font\relax}
\providecommand{\BIBforeignlanguage}[2]{{%
\expandafter\ifx\csname l@#1\endcsname\relax
\typeout{** WARNING: IEEEtran.bst: No hyphenation pattern has been}%
\typeout{** loaded for the language `#1'. Using the pattern for}%
\typeout{** the default language instead.}%
\else
\language=\csname l@#1\endcsname
\fi
#2}}
\providecommand{\BIBdecl}{\relax}
\BIBdecl

\bibitem{Schulman:96}
L.~Schulman, ``Coding for interactive communication,'' \emph{{IEEE} Trans. Inf.
  Theory}, vol.~42, pp. 1745--1756, Nov. 1996.

\bibitem{sahai_thesis}
A.~Sahai, ``Anytime information theory,'' Ph.D. dissertation, Massachusetts
  Institute of Technology (MIT), 2001.

\bibitem{SukhavasiHassibi:11}
R.~Sukhavasi and B.~Hassibi, ``Linear error correcting codes with anytime
  reliability,'' in \emph{Proc. {IEEE} Int. Symp. Inform. Theory (ISIT)},
  Jul.-Aug. 2011, pp. 1748--1752.

\bibitem{KhistiDraper:14}
A.~Khisti and S.~C. Draper, ``The streaming-{DMT} of fading channels,''
  \emph{{IEEE} Trans. Inf. Theory}, vol.~60, pp. 7058--7072, Nov. 2014.

\bibitem{DraperKhisti:11}
S.~C. Draper and A.~Khisti, ``Truncated tree codes for streaming data:
  {I}nfinite-memory reliability using finite memory,'' in \emph{Proc.
  International Symposium on Wireless Communication Systems (ISWCS)}, Nov.
  2011, pp. 136--140.

\bibitem{DraperChangSahai:14}
S.~C. Draper, C.~Chang, and A.~Sahai, ``Lossless coding for distributed
  streaming sources,'' \emph{{IEEE} Trans. Inf. Theory}, vol.~60, pp.
  1447--1474, Mar. 2014.

\bibitem{Strassen}
V.~Strassen, ``{Asymptotische Absch\"{a}tzungen in Shannons
  Informationstheorie},'' in \emph{Trans. Third Prague Conf. Inf. Theory},
  Prague, 1962, pp. 689--723,
  http://www.math.cornell.edu/$\sim$pmlut/strassen.pdf.

\bibitem{PolyanskiyPoorVerdu:10}
Y.~Polyanskiy, H.~V. Poor, and S.~Verd{\'u}, ``Channel coding rate in the
  finite blocklength regime,'' \emph{{IEEE} Trans. Inf. Theory}, vol.~56, pp.
  2307--2359, May 2010.

\bibitem{Hayashi09}
M.~Hayashi, ``Information spectrum approach to second-order coding rate in
  channel coding,'' \emph{{IEEE} Trans. Inf. Theory}, vol.~55, no.~11, pp.
  4947--4966, 2009.

\bibitem{tan2015asymptotic}
V.~Y.~F. Tan, \emph{Asymptotic estimates in information theory with
  non-vanishing error probabilities}.\hskip 1em plus 0.5em minus 0.4em\relax
  Foundations and Trends{\textregistered} in Communications and Information
  Theory, 2015, vol.~11, no. 1-2.

\bibitem{he2009redundancy}
D.-K. He, L.~Lastras-Monta{\v{n}}o, E.-H. Yang, A.~Jagmohan, and J.~Chen, ``On
  the redundancy of {Slepian--Wolf} coding,'' \emph{IEEE Trans. Inf. Theory},
  vol.~55, no.~12, pp. 5607--5627, 2009.

\bibitem{AltugWagner:14}
Y.~Altug and A.~B. Wagner, ``Moderate deviations in channel coding,''
  \emph{{IEEE} Trans. Inf. Theory}, vol.~60, pp. 4417--4426, Aug. 2014.

\bibitem{PolyanskiyVerdu:10}
Y.~Polyanskiy and S.~Verd{\'u}, ``Channel dispersion and moderate deviations
  limits for memoryless channels,'' in \emph{Proc. 48th Annual Allerton
  Conference on Communication, Control, and Computing}, Monticello, IL, 2010.

\bibitem{altug2010moderate}
Y.~Altug and A.~Wagner, ``Moderate deviation analysis of channel coding:
  Discrete memoryless case,'' in \emph{Proc. {IEEE} Int. Symp. Inform. Theory
  (ISIT)}, Jun. 2010, pp. 265--269.

\bibitem{LinTanMotani:arxiv15}
Z.~Lin, V.~Y.~F. Tan, and M.~Motani, ``On error exponents and moderate
  deviations for lossless streaming compression of correlated sources,''
  \emph{{IEEE} Trans. Inf. Theory}, submitted for publication. [Online].
  Available: http://arxiv.org/abs/1507.03190.

\bibitem{DemboZeitouni:09}
A.~Dembo and O.~Zeitouni, \emph{Large deviations techniques and
  applications}.\hskip 1em plus 0.5em minus 0.4em\relax New York: Springer
  Verlag., 2009.

\bibitem{HayashiTan:15}
M.~Hayashi and V.~Y.~F. Tan, ``Asymmetric evaluations of erasure and undetected
  error probabilities,'' \emph{{IEEE} Trans. Inf. Theory}, vol.~61, pp.
  6560--6577, Dec. 2015.

\bibitem{Forney:68}
G.~D. Forney, ``Exponential error bounds for erasure, list and decision
  feedback schemes,'' \emph{{IEEE} Trans. Inf. Theory}, vol.~14, pp. 206--220,
  Mar 1968.

\bibitem{Feller:71}
W.~Feller, \emph{An Introduction to Probability Theory and its
  Applications}.\hskip 1em plus 0.5em minus 0.4em\relax New York: Wiley, 1971,
  vol.~II.

\bibitem{ShkelTanDraper:15}
Y.~Y. Shkel, V.~Y.~F. Tan, and S.~C. Draper, ``Unequal message protection:
  {A}symptotic and {non-asymptotic} tradeoffs,'' \emph{{IEEE} Trans. Inf.
  Theory}, vol.~61, pp. 5396--5416, Oct. 2015.

\bibitem{BoradeNakibogluZheng:09}
S.~Borade, B.~Nakibo{\u{g}}lu, and L.~Zheng, ``Unequal error protection: {A}n
  information-theoretic perspective,'' \emph{{IEEE} Trans. Inf. Theory},
  vol.~55, pp. 5511--5539, Dec. 2009.

\bibitem{Sahai:08}
A.~Sahai, ``Why do block length and delay behave differently if feedback is
  present,'' \emph{{IEEE} Trans. Inf. Theory}, vol.~54, pp. 1860--1886, May
  2008.

\end{thebibliography}
\end{document}